\documentclass[aps,pra,showpacs,amsmath,amssymb,superscriptaddress,reprint,10pt,longbibliography,noeprint]{revtex4-1}

\usepackage{amsmath,amsfonts,amssymb,amsthm,bbm,graphicx,enumerate,times}
\usepackage{mathtools}
\usepackage[usenames,dvipsnames]{color}
\usepackage{todonotes} 
\usepackage{comment}
\usepackage{hyperref}
\usepackage{tikz}
\usepackage{graphicx}
\usepackage{float}
\usetikzlibrary{shapes,positioning}
\usepackage{etoolbox}

 \definecolor{jens}{rgb}{.1,0.5,.4}

 \newtheorem{theorem}{Theorem}

 \newtheorem{lemma}[theorem]{Lemma}

 \newtheorem{observation}[theorem]{Observation}



\definecolor{henrik}{rgb}{.8,.3,0}

\newcommand{\mc}[1]{\mathcal{#1}}

\newcommand{\mb}[1]{\mathbb{#1}}


\newcommand{\e}{\mathrm{e}}

\newcommand{\tr}{\mathrm{Tr}} 
\newcommand{\Tr}{\mathrm{Tr}} 

\newcommand{\id}{\mb{1}}

\newcommand{\one}{\mathbb{I}}
\newcommand{\1}{\mathrm{id}}



\newcommand{\RR}{\mb{R}}

\renewcommand{\1}{\id}



\newcommand{\norm}[1]{\left\Vert #1 \right\Vert}

\newcommand{\ket}[1]{\left.\left|{#1}\right.\right\rangle}

\newcommand{\bra}[1]{\left.\left\langle{#1}\right.\right|}

\newcommand{\braket}[2]{\left\langle #1 \middle| #2 \right\rangle}

\newcommand{\ketbra}[2]{\ket{#1} \!\! \bra{#2}}

  \newcommand{\proj}[1]{\ketbra{#1}{#1}}


\begin{document}

\title{Single-shot holographic compression from the area law}
 
\author{H.\ Wilming}

\affiliation{Institute  for  Theoretical  Physics,  ETH  Zurich,  8093  Zurich,  Switzerland}

\author{J.\ Eisert}
\affiliation{Dahlem Center for Complex Quantum Systems, Physics Department, Freie Universit{\"a}t Berlin, 14195 Berlin, Germany}
\affiliation{Department of Mathematics and Computer Science, Freie Universit{\"a}t Berlin, 14195 Berlin, Germany}

\begin{abstract}
The area law conjecture states that the entanglement entropy of a region of space in the ground state of a gapped, local Hamiltonian only grows like the surface area of the region.
We show that, for any state that fulfills an area law, the reduced quantum state of a region of space can be unitarily compressed into a thickened boundary of the region. If the interior of the region is lost after this compression, the full quantum state can be recovered to high precision by a quantum channel only acting on the thickened boundary.
The thickness of the boundary scales inversely proportional to the error for arbitrary spin systems and logarithmically with the error for quasi-free bosonic systems. 
Our results can be interpreted as a single-shot operational interpretation of the area law. 
The result for spin systems follows from a simple inequality showing that any probability distribution with entropy $S$ can be approximated to error $\varepsilon$ by a distribution with support of size $\exp(S/\varepsilon)$, which we believe to be of independent interest.
We also discuss an emergent approximate correspondence between bulk and boundary operators and the relation of our results to tensor network states.
\end{abstract}
\maketitle

The \emph{area law conjecture} arguably constitutes one of the key insights in many-body physics in the last decades. 
It states that the entanglement entropy $S$ of a spatial region $A$
in a pure ground state of a gapped system with local interactions is 
at most proportional to the surface area $|\partial A|$ of the region,
\begin{align}\label{eq:arealaw}
	S(A)  \leq k\, | \partial A |,
\end{align}
where $k>0$ is some constant independent of $A$ \cite{Srednicki,AreaReview}. 
It has been rigorously proven for large classes of systems: for lattice spin-models with sufficiently well-behaved low-energy density of states \cite{Masanes2009, Brandao2015}, 
for one-dimensional gapped lattice spin-models \cite{HastingsAreaLaw,VaziraniAreaLaw,Brandao2013} and for
bosonic non-interacting gapped lattice models in any dimension \cite{Area,GeneralBosonicAreaLaw,NJPAreaLaw}.
In gapless systems, the area law is modified by a logarithmic correction for non-interacting fermions 
\cite{WolfAreaLaw, PhysRevLett.105.050502,PhysRevB.92.115129,Halfspace}
and more generally for systems described by a $1+1$ dimensional conformal field theory \cite{Holzhey1994,Calabrese2004}.
In contrast to states that fulfill an area law, the entanglement entropy of a region scales like the volume of the region instead of its surface area for generic pure states \cite{Lubkin1978,Lloyd1988,Page1993}. The area law therefore tells us that the corresponding states are little entangled compared to generic states.
This property lies at the heart of modern ansatz classes for ground states of many-body systems in terms of tensor network states, many of which fulfill an area law by construction 
\cite{Orus-AnnPhys-2014,AreaReview,VerstraeteBig,SchuchReview,PhysRevLett.105.010502}.

Nevertheless,  the operational meaning of the area law for a single system is not at all obvious from the point of view of quantum information theory.
This is because the entanglement entropy quantifies the ratio of maximally entangled bits that can be distilled from many identical and independent copies of the system \cite{Bennett1996} and therefore does not quantify in a meaningful way the entanglement content of a single system. 
In many-body physics, however, we are usually concerned with a large, but single system.  
It is therefore desirable to find a clear interpretation of the area law in terms of the von~Neumann entropy that pertains to single systems
\footnote{This is a complementary problem to proving area laws in terms of single-copy quantifiers of entanglement, which quantify how many maximally entangled bits can be distilled from a single copy of the system, as advocated in Ref.~\cite{Cramer2005}.}.

 \begin{figure}
\includegraphics[width=0.48\textwidth]{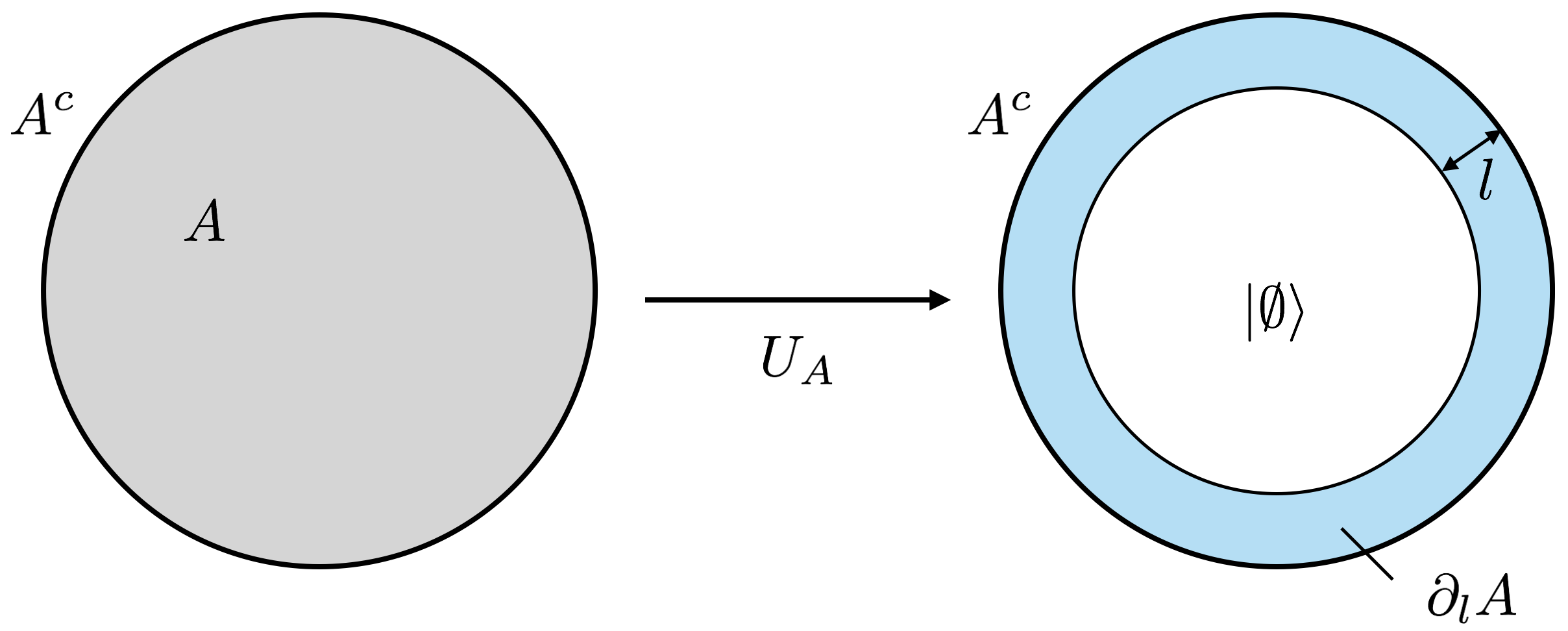}
	 \caption{\textbf{Holographic compression:} The unitary $U_A$ approximately compresses the quantum information in a region $A$
	 into a thickened boundary $\partial_l A$ of thickness $l$ and replaces the quantum state in the interior by a reference state vector $\ket{\emptyset}$ that can be chosen freely. 
	 The thickness $l$ required for an approximation error $\varepsilon$ grows as $1/\varepsilon$ with the error for spin systems and as $\log(1/\varepsilon)$ for quasi-free bosonic lattice models. 
	 It is always independent of the system size and independent of the size of $A$ for spin systems. For bosonic systems it depends very weakly on $A$ as $\log(|A|)$.}
	\label{fig:compression}
 \end{figure}

In this work, we provide such an interpretation by showing that for any pure state vector fulfilling an area law the quantum information contained in a region of space can be approximately compressed into a thickened boundary of the region by a unitary operation acting only within the region. This clarifies the quantitative meaning of the area law and, as we show below, brings it in contact with current research in the field of quantum error correction and holography.
 
Indeed, the area law is sometimes compared to the holographic principle in fundamental physics, which, very roughly speaking, states that the entropy contained in a region of space is bounded by its surface area \cite{RevModPhys.74.825}.
The most well-known example for this behavior is the Bekenstein-Hawking formula for a Schwarzschild black hole, which associates an entropy proportional to the surface area of the event horizon to a black hole \cite{PhysRevD.7.2333,Hawking,PhysRevLett.96.181602}.
The holographic principle suggests that it is possible to explicitly encode the quantum state of a region of space into one ``on'' its boundary. 
This idea is made manifest in the celebrated AdS/CFT correspondence, where quantum gravity on an 
Anti-de Sitter (AdS) background is described by a conformal field theory on one lower spatial dimension \cite{Maldacena98,Witten}.
Our results show that a similar encoding exists, at least approximately, for any pure quantum state that obeys an area law. We therefore call the corresponding procedure \emph{holographic compression}.

\emph{Main result. } 
To state our results we need to introduce some notation. 
We assume a quantum system is defined on some regular $D$-dimensional lattice $\Lambda$. To each site  $x\in\Lambda$ we associate a Hilbert-space $\mc H_x$ with $\mathrm{dim}(\mc H_x)= d$. 
Later, we also present results for harmonic, i.e., non-interacting bosonic, lattice systems, which can be seen as lattice-regularized free field theories.
For any region $A\subset \Lambda$ and any distance $l$ we introduce the \emph{thickened boundary} $\partial_l A$ as the set of sites in $A$ whose distance to the complement $A^c=\Lambda\setminus A$ of $A$ is less than $l$ in the lattice-distance (see Fig.~\ref{fig:compression}). We denote by $|A|$ the number of lattice sites in $A$.
In correspondence to the thickened boundary we also define the \emph{bulk} of $A$ as $A\setminus \partial_l A$. 
Finally, to state our result precisely, we define the function 
\begin{align}
	l_A(k) \coloneqq \min \left\{ l : |\partial_{l} A|\geq k|\partial A| \right\}.
\end{align}
The quantity $l_A(k)$ is the minimum distance such that the thickened boundary of width $l_A(k)$ contains at least $k |\partial A|$ sites. 
For large, smooth regions, such as spheres, we have $l_A(k)\sim k$.  
Finally, we write $\approx_\varepsilon$ to indicate that an equation holds up to an error $\varepsilon$ in terms of fidelity \cite{Nielsen2000}, i.e., for pure states $\ket{\psi} \approx_\varepsilon \ket{\phi}$ iff $|\langle\psi|\phi\rangle |^2 \geq 1-\varepsilon$. 
\begin{theorem}[Holographic compression]\label{thm:main}
	Consider a quantum state $\rho=|\psi\rangle\langle \psi|$ on the lattice fulfilling an area law, $S(A)\leq k|\partial A|$.
	Then for any region $A$, any $\varepsilon>0$ and any $l\geq l_A(k/\varepsilon)$, there exists a unitary $U_A$ acting only in $A$ which disentangles 
the bulk of $A$ from the complement of $A$,
\begin{align}\label{eq:basicequation}
 U_A   \ket{\psi}    \approx_\varepsilon|\chi \rangle\otimes \ket{\emptyset}, 
\end{align}
	where $|\chi\rangle$   is supported on $A^c\cup \partial_{l}A$ 
	and $\ket{\emptyset}$ is an arbitrary state vector on the bulk of $A$ that can be chosen freely. 
\end{theorem}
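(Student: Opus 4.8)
The plan is to reduce the claim to a dimension count on the Schmidt spectrum of $\ket{\psi}$ across the cut $A \,|\, A^c$, fed by the entropy-to-support inequality advertised in the abstract. First I would write the Schmidt decomposition $\ket{\psi} = \sum_i \sqrt{p_i}\,\ket{i}_A \otimes \ket{i'}_{A^c}$, where $\{p_i\}$ are the eigenvalues of the reduced state $\rho_A$ and the entanglement entropy is $S(A) = -\sum_i p_i \log p_i$ (measuring the logarithm in base $d$, so that dimensions and $\exp(S/\varepsilon)$-type bounds line up with local Hilbert-space sizes). The area law hypothesis then bounds this entropy by $k|\partial A|$.

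Next I would invoke the key inequality. Since $\{p_i\}$ is a probability distribution of entropy at most $k|\partial A|$, there is an index set $T$ carrying weight $\sum_{i\in T} p_i \geq 1-\varepsilon$ with $|T| \leq d^{\,S(A)/\varepsilon} \leq d^{\,k|\partial A|/\varepsilon}$. The inequality itself is a one-line thresholding argument: keeping all eigenvalues above $\delta = d^{-S(A)/\varepsilon}$, the discarded tail has weight at most $S(A)/\log(1/\delta) = \varepsilon$ because $\log(1/p_i)\geq\log(1/\delta)$ there, while at most $1/\delta = d^{S(A)/\varepsilon}$ eigenvalues can exceed $\delta$.

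Truncating the Schmidt sum to $T$ and renormalizing yields a state $\ket{\psi'}$ whose $A$-marginal is supported on $V_A = \mathrm{span}\{\ket{i}_A : i\in T\}$, of dimension $|T|$, and with $|\braket{\psi}{\psi'}|^2 = \sum_{i\in T}p_i \geq 1-\varepsilon$, i.e.\ $\ket{\psi}\approx_\varepsilon\ket{\psi'}$. The hypothesis $l \geq l_A(k/\varepsilon)$ is precisely what guarantees $|\partial_l A| \geq (k/\varepsilon)|\partial A|$, hence $\dim \mathcal{H}_{\partial_l A} = d^{|\partial_l A|} \geq d^{\,k|\partial A|/\varepsilon} \geq |T| = \dim V_A$. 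Writing $\mathcal{H}_A = \mathcal{H}_{\partial_l A} \otimes \mathcal{H}_{\mathrm{bulk}}$, there is then a unitary $U_A$ on $A$ sending $V_A$ isometrically into $\mathcal{H}_{\partial_l A} \otimes \ket{\emptyset}$ for any prescribed bulk vector $\ket{\emptyset}$: one maps an orthonormal basis of $V_A$ to orthonormal vectors of the form $\ket{\,\cdot\,}_{\partial_l A}\otimes\ket{\emptyset}$ (possible by the dimension count) and extends to a unitary on $\mathcal{H}_A$. Applying it gives $U_A\ket{\psi'} = \ket{\chi}\otimes\ket{\emptyset}$ with $\ket{\chi}$ supported on $A^c \cup \partial_l A$, and since the fidelity is unitarily invariant, $U_A\ket{\psi} \approx_\varepsilon \ket{\chi}\otimes\ket{\emptyset}$.

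I expect the only genuine subtlety to be the bookkeeping that makes the dimension count tight: one must verify that the support bound $|T|$, expressed in the same logarithmic base as the local dimension $d$, matches $\dim\mathcal{H}_{\partial_l A}$ so that the stated threshold $l \geq l_A(k/\varepsilon)$ is exactly sufficient rather than off by a factor of $\log d$. Everything else — the Schmidt truncation, the thresholding estimate, and the existence of the compressing unitary — is routine linear algebra.
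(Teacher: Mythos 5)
Your proposal is correct and takes essentially the same route as the paper: Schmidt truncation to the dominant eigenvalues, an entropy-to-support bound, and a dimension-count unitary embedding into the thickened boundary, with fidelity preserved by unitary invariance. Your value-thresholding at $\delta = d^{-S(A)/\varepsilon}$ is just a reformulation of the paper's Lemma~\ref{lemma:main} (which truncates by rank $M$ and uses $q_M \leq 1/M$), and the base-$d$ bookkeeping you flag as the only subtlety is handled identically in the paper by fixing $\log(d)=1$.
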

The situation is illustrated in Fig.~\ref{fig:compression}. 
The bulk of $A$ contains all of the lattice sites of $A$ up to a fraction 
$\sim k |\partial A| / ( \varepsilon |A|)$.
If $A$ is a sphere of radius $r$, then only a fraction $\sim k/(\varepsilon r)$ of sites remain entangled with the outside after $U_A$ is applied. 
On large scales, this fraction is arbitrarily small.
Yet, effectively all of the quantum information contained in region $A$ is compressed into the thickened boundary. 
If after the compression the bulk is lost, we can simply recover the global state vector $\ket{\psi}$ to error $\varepsilon$ by first tensoring with the fixed state vector $\ket{\emptyset}$ and then applying $U_A^\dagger$. 

We can also use holographic compression on the complement of the region $A$, compressing all the quantum information contained outside of $A$ into a thickened boundary of $A$ outside of $A$. 
This effectively produces a state vector $|\chi\rangle_{\overline{A}}$, supported on $A$ and a small boundary, whose reduced state coincides with the original state on $A$ up to a small error.
Such a \emph{holographic purification} implies that local expectation values and measurement statistics in region $A$ can in principle be computed up to a small error using pure states defined on a region only slightly larger than $A$ \footnote{This is indeed what happens in tensor network states with constant bond dimension, where all of the above is true exactly (with $\varepsilon=0$).}. 
Of course, instead of the thickened boundary, we could also compress the information in $A$ into any other subregion of $A$ that contains at least $S(A)/\varepsilon$ many sites.
If $\ket{\psi}$ arises as a lattice-description of a $D+1$-dimensional quantum field theory (QFT) with lattice distance $a$, the area law should be formulated as $S(A) \leq k/a^{D-1} |\partial A|$ to include the dependence on the lattice spacing. In this case the thickness of the thickened boundary scales as $l\sim k a$ and hence \emph{decreases} as we send the lattice-spacing to zero \footnote{note3}. 
A rigorous formulation of holographic compression and a proof of a statement similar to Thm.~\ref{thm:main} in the continuum would likely have to proceed via operator algebraic methods instead of the elementary methods we present here \cite{Haag1996,Witten2018}. 
We hence leave it to future work.

\emph{The proof. } Let us now explain how to arrive at our main result and then discuss its consequences in more detail. 
We consider a state vector $\ket{\psi}$ that fulfills an area law as in \eqref{eq:arealaw}, where the von-Neumann entanglement entropy is given by
$S(A) = -\tr\left(\rho_A \log \rho_A\right)$, with $\rho_A = \Tr_{A^c}(\proj{\psi})$.
We assume that logarithms are taken with base $d$, i.e., $\log(d)=1$.  
Then the maximum amount of entropy in a region $A$ is given by $\log(d^{|A|})=|A|$. 
For any region $A$, the state vector $\ket{\psi}$ can always be written, using a Schmidt-decomposition, 
as 
\begin{align}\label{eq:schmidt}
	\ket{\psi} = \sum_{j=1}^{d^{|A|}} \sqrt{p_j} \ket{j}_A \otimes \ket{j}_{A^c},
\end{align}
where $\{\ket{j}_A\}_j$ is an orthonormal basis on $\mc H_A=\otimes_{x\in A}\mc H_x$ (and similarly $A^c$). 
The numbers $0\leq p_j\leq 1$ correspond to the spectrum of $\rho_A$. 
We gather them in a vector $\mathbf{p}$ and order them non-increasingly $p_1 \geq p_2 \geq \cdots$. 
The entanglement entropy of $A$ is then given by their Shannon entropy:
\begin{align}
	S(A) = H(\mathbf{p}) \coloneqq -\sum_j p_j \log(p_j).
\end{align}
Ideally, our goal is to find a unitary $U_A$ that brings the reduced state on $A$ into the form
\begin{align}\label{eq:ideal}
	U_A\rho_AU_A^\dagger = \tilde \rho_{\partial_l A} \otimes \proj{\emptyset}, 
\end{align}
where $\tilde \rho_{\partial_l A}$ is supported on a thickened boundary of width $l$ (as small as possible) and $\ket{\emptyset}$ is some pure state on the corresponding bulk. 
Suppose for a moment that $p_j = 0$ for $j > d^{N}$ with $N\leq |A|$, i.e., $\rho_A$ has rank smaller or equal to $d^N$. 
In this case, we could always divide $A$ into a region $A_1$ containing $N$ sites and the rest $A_2$. 
Now let $\{\ket{\tilde j}_{A_1}\}$ be an arbitrary basis in the Hilbert-space on $A_1$ and $\ket{\emptyset}_{A_2}$ an arbitrary pure state on $A_2$.
Then we could always find a unitary $U_A$ such that \footnote{Note4}
\begin{align}
	U_A \ket{j}_A = \ket{\tilde j}_{A_1} \otimes \ket{\emptyset}_{A_2},\quad j=1,\ldots, d^N
\end{align}
and 
\begin{align}
	U_A \rho_A U_A^\dagger = \sum_{j=1}^{d^N} p_j |\tilde j\rangle\langle \tilde j|_{A_1} \otimes \proj{\emptyset}_{A_2} = \tilde \rho_{A_1} \otimes \proj{\emptyset}_{A_2}.\nonumber
\end{align}
By choosing $U_A$ appropriately we could hence compress all the quantum information contained in $A$ into a subsystem of size $N$. 
In particular if $N\leq K|\partial A|$ for some constant $K>0$, we could compress all the quantum information into a thickened boundary of width $l\sim K$. 
Unfortunately, however, in general the state $\rho_A$ has full rank, i.e., $p_j>0$ for all $j$. 
This is not in contradiction with $S(A)$ being small: even a state with infinite rank can have arbitrarily small entropy. 
Nevertheless, we now show that the above procedure works approximately if we consider instead of $\ket{\psi}$ an approximation of it, where we only keep the $M$ largest $p_j$ and choose $M$ correctly. 
Let $P_M$ be the projector onto the subspace spanned by $\{\ket{j}_A\}_{j=1}^M$ and consider the state vector
\begin{align}\label{eq:stateapprox}
	\ket{\psi_M} &\coloneqq  \frac{P_M \otimes \one \ket{\psi}}{{\bra{\psi} P_M\otimes \one \ket{\psi}^{1/2}}} \\
			    &= \frac{1}{({\sum_{k=1}^M p_k})^{1/2}} \sum_{j=1}^M \sqrt{p_j} \ket{j}_A\otimes \ket{j}_{A^c},\nonumber
\end{align}
where we have used that 
\begin{align}
	|\braket{\psi}{\psi_M}|^2 = \bra{\psi}P_M\otimes \one \ket{\psi} = \sum_{j=1}^M p_j. 
\end{align}
The following lemma shows that $\ket{\psi_M}$ is a good approximation to $\ket{\psi}$ if $M$ is sufficiently large (see Appendix~\ref{app:lemma2} for a proof). 
\begin{lemma}[Low-rank approximation of low-entropy distributions]\label{lemma:main}
	Let $\mathbf{p}=\mathbf{q}\oplus \mathbf{r}$ be any ordered probability distribution, where $\mathbf{q}$ contains the largest $M$ entries and $\mathbf{r}$ the remaining entries of $\mathbf{p}$. 	Then 
	\begin{align}
		\sum_{j=1}^M p_j \geq 1 - \frac{H(\mathbf{p})-H(\mathbf{q})}{\log(M) - \log(\sum_{k=1}^Mp_k)} \geq 1 - \frac{H(\mathbf{p})}{\log(M)}. 
	\end{align}
	\end{lemma}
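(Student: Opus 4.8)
The plan is to prove both inequalities directly, the entire content residing in the fact that $\mathbf{p}$ is ordered. First I would observe that the numerator $H(\mathbf{p})-H(\mathbf{q})$ is nothing but the \emph{unnormalised} tail entropy: since $H$ sums the nonnegative terms $-p_j\log p_j$ entry by entry, one has $H(\mathbf{p})=H(\mathbf{q})+H(\mathbf{r})$ exactly, so that $H(\mathbf{p})-H(\mathbf{q})=H(\mathbf{r})=-\sum_{j>M}p_j\log p_j$. Writing $Q\coloneqq\sum_{k=1}^M p_k$ for the retained weight, the first inequality is then equivalent to the cleaner statement $H(\mathbf{r})\geq (1-Q)(\log M-\log Q)$, and I would aim to establish it in this form before translating back.

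The key step is a lower bound on $-\log p_j$ for the discarded indices $j>M$. Because the distribution is ordered, every tail entry satisfies $p_j\leq p_M$, and because $p_M$ is the smallest of the top $M$ entries it is at most their average, $p_M\leq Q/M$. Hence $p_j\leq Q/M$ and therefore $-\log p_j\geq \log M-\log Q$ for all $j>M$. Multiplying this inequality by $p_j\geq 0$ and summing over the tail gives
\begin{align}
H(\mathbf{r})=\sum_{j>M}p_j(-\log p_j)\geq(\log M-\log Q)\sum_{j>M}p_j=(1-Q)(\log M-\log Q),\nonumber
\end{align}
which, after dividing by $\log M-\log Q>0$, rearranges exactly into $Q\geq 1-(H(\mathbf{p})-H(\mathbf{q}))/(\log M-\log Q)$.

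The second inequality then follows with no further work, by weakening numerator and denominator simultaneously: $H(\mathbf{q})\geq 0$ since each summand $-p_j\log p_j$ is nonnegative on $[0,1]$, so the numerator obeys $H(\mathbf{p})-H(\mathbf{q})\leq H(\mathbf{p})$; and $Q\leq 1$ forces $\log Q\leq 0$, so the denominator obeys $\log M-\log Q\geq \log M$. Replacing the numerator by the larger $H(\mathbf{p})$ and the denominator by the smaller $\log M$ can only enlarge the subtracted fraction, yielding $1-(H(\mathbf{p})-H(\mathbf{q}))/(\log M-\log Q)\geq 1-H(\mathbf{p})/\log M$.

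I do not expect a genuine obstacle: all the mileage comes from the single observation that ordering pins the $M$-th value below the average $Q/M$, converting the tail entropy into a usable bound on the tail weight. The only points needing mild care are the degenerate cases: one should invoke the convention $0\log 0=0$ so that vanishing tail entries cause no trouble, and record the harmless restriction $Q<M$ (automatic for $M\geq 2$ since $Q\leq 1$) under which $\log M-\log Q$ is strictly positive and the division is legitimate.
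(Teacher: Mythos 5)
Your proof is correct and follows essentially the same route as the paper's: the paper bounds each tail entry by the smallest kept entry $q_M$ and then uses $q_M \leq Q/M$, while you merge these two steps into the single chain $p_j \leq p_M \leq Q/M$, after which the rearrangement and the weakening to the second inequality are identical. Your explicit attention to the degenerate cases ($0\log 0=0$, positivity of $\log M - \log Q$) is a minor point the paper leaves implicit, but it does not change the argument.
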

We emphasize that this result requires no assumption about the distribution $\mathbf{p}$ and also applies to infinite-dimensional discrete probability distributions. 
It can be seen as a single-shot compression argument. 
Due to the monotonicity of R\'enyi entropies, the above result implies a similar result for all R\'enyi entropies with $\alpha<1$. 
However, using an explicit counter-example we show the following in Appendix~\ref{app:counterexample}:

\begin{observation}[Higher R\'enyi entropies] Holographic compression cannot be
deduced from an area law for R\'enyi entropies with parameter $\alpha>1$.
\end{observation}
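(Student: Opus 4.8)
The plan is to refute the implication by an explicit counterexample at the level of the Schmidt spectrum: I will construct a family of probability distributions $\mathbf{p}$, which I think of as the spectrum of a reduced state $\rho_A$, whose R\'enyi entropy $S_\alpha$ with $\alpha>1$ remains bounded uniformly in a parameter $N$, but for which the number of Schmidt coefficients needed to reach fidelity $1-\varepsilon$ diverges as $N\to\infty$. Since holographic compression, as in Theorem~\ref{thm:main}, succeeds precisely when $\ket{\psi}$ can be approximated to error $\varepsilon$ by retaining a number of Schmidt vectors controlled by the entropy (so that they fit into a thickened boundary of bounded width), such a family shows that no statement of this form can be deduced from an area law written in terms of $S_\alpha$ alone.

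Concretely I will take $p_1=1-\delta$ and distribute the remaining weight $\delta$ uniformly over $N$ entries equal to $\delta/N$. For the compression figure of merit, I would note that $\sum_{j=1}^M p_j\geq 1-\varepsilon$ with $\varepsilon<\delta$ forces $M\gtrsim N(1-\varepsilon/\delta)$, so the smoothed support grows linearly in $N$. For the R\'enyi entropy I would compute
\begin{align}
	\sum_j p_j^\alpha = (1-\delta)^\alpha + \frac{\delta^\alpha}{N^{\alpha-1}},
\end{align}
and observe that for $\alpha>1$ the tail term $\delta^\alpha/N^{\alpha-1}$ vanishes as $N\to\infty$, so that $S_\alpha\to \tfrac{\alpha}{1-\alpha}\log(1-\delta)$, a finite constant independent of $N$. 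This is the step isolating the mechanism: for $\alpha>1$ the R\'enyi entropy is governed by the head $p_1$ and is blind to the long thin tail, whereas both the required support and the von~Neumann entropy $H(\mathbf{p})\sim\delta\log N$ are dictated by that tail.

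To make this a statement about a lattice, I will realize $\mathbf{p}$ as the Schmidt spectrum of a pure state across the boundary of a region $A$, so that the hypothetical area law reads $S_\alpha(A)\leq k|\partial A|$ with $k$ and $|\partial A|$ fixed while $N$ is free. The thickened boundary into which one would compress must then hold at least $M\sim N$ Schmidt vectors, hence contain at least $\log_d M\sim\log_d N$ sites, a width that grows without bound even as the $\alpha$-R\'enyi area-law constant stays fixed. This contradicts any compression guarantee phrased through $S_\alpha$ and proves the Observation; by contrast, the divergence of $H(\mathbf{p})$ is exactly what Lemma~\ref{lemma:main} requires, and confirms that the breakdown is specific to $\alpha>1$.

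I expect the main obstacle to be conceptual rather than computational: carefully fixing the parameters so that the premise (a genuine $S_\alpha$-area law with a fixed constant) holds while the conclusion (compression into a boundary of bounded width per unit $\varepsilon$) demonstrably fails, and arguing that this spectrum is compatible with an honest lattice geometry. The delicate quantitative point is the estimate $\delta^\alpha/N^{\alpha-1}\to 0$, which is what separates the $\alpha>1$ case from the von~Neumann case and is therefore the crux of the whole argument.
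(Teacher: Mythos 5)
Your proposal is correct and takes essentially the same route as the paper: the paper's counterexample is precisely your head-plus-flat-tail spectrum with $N=d^{|A|}-1$ and head $1-\delta=d^{-k|\partial A|}$, concluding likewise that the rank $M$ needed for fidelity $1-\varepsilon$ forces $\log M$ to scale with the volume of $A$ while every $H_\alpha$ with $\alpha>1$ obeys an area law. The only cosmetic difference is that the paper controls all $\alpha>1$ at once via the sandwich $H_\alpha\geq H_\infty\geq \frac{\alpha-1}{\alpha}H_\alpha$ together with $H_\infty=-\log p_{\mathrm{max}}$, whereas you compute $\sum_j p_j^\alpha$ directly for your family.
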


As a further side-remark, let us mention that Lemma~\ref{lemma:main} can be reformulated in terms of \emph{smooth entropies}, appearing, e.g., in quantum cryptography, as $\varepsilon H^\varepsilon_0(\mathbf{p}) \leq H(\mathbf{p})$, where $H^\varepsilon_0$ denotes the smoothed max entropy \cite{Smooth}. 
Coming back to our application, let us now choose $M$ and $l$ such that 
\begin{align}\label{eq:conditionM}
	|\partial_l A|\geq \log(M) \geq \frac{k}{\varepsilon} |\partial A| \geq \frac{S(A)}{\varepsilon}.
\end{align}
For example, we could choose $l = l_A(k/\varepsilon)$ and $M$ as the dimension of the full Hilbert-space of the corresponding thickened boundary. 
For any choice fulfilling~\eqref{eq:conditionM} we have $|\braket{\psi}{\psi_M}|^2 \geq 1 - \varepsilon$.
We can now define the unitary $U_A$ with respect to the state vector $\ket{\psi_M}$ as before, i.e., as any unitary that fulfills
\begin{eqnarray}\label{eq:definingunitary}
	U_A P_M U_A^\dagger = \tilde P_M\otimes \proj{\emptyset}   \coloneqq \sum_{j=1}^M |\tilde j\rangle\langle\tilde j|\otimes |\emptyset\rangle\langle\emptyset|   , 
\end{eqnarray}
where $\tilde P_M$ is any $M$-dimensional subspace on the thickened boundary $\partial_{l} A$ with basis $\{\ket{\tilde{j}}\}$.
For any unitary fulfilling this condition, we then obtain \eqref{eq:basicequation} by setting
\begin{align}
	|\chi \rangle\otimes\ket{\emptyset} \coloneqq U_A \ket{\psi_M}, 
\end{align}
which completes the proof of the theorem,
since
\begin{align}
	|\langle \psi| U_A^\dagger |\chi\rangle\otimes|\emptyset\rangle|^2 = |\langle\psi |\psi_M\rangle|^2 \geq 1 - \varepsilon. 
\end{align}

\emph{Bulk-boundary correspondence. } 
Condition \eqref{eq:definingunitary} is reminiscent of the bulk-boundary correspondence in holography, where a local operator supported in the bulk of space can be represented as an, in general non-local, operator on the corresponding boundary description. 
In our case, any operator supported within the subspace $P_M$ is mapped exactly to the thickened boundary. More generally, consider $k$ operators $\{X_i\}_{i=1}^k$ supported in $A$ that commute with $P_M$.
Then the unitary $U_A$ acts on these operators as
\begin{align}
	U_A X_i U_A^\dagger  = \tilde X_i\otimes \proj{\emptyset} + \tilde Q_M \tilde X'_i \tilde Q_M,
\end{align}
where $\tilde Q_M = U_A(\one - P_M)U_A^\dagger$. It follows that
\begin{align}
	\langle \psi| X_1 \cdots X_k|\psi \rangle \approx \tr\left(\tilde X_1 \cdots \tilde X_k \tilde \rho_{\partial_l A}\right),
\end{align}
i.e., correlation functions of operators that fulfill $[X,P_M]=0$ can equivalently be computed (with small error) in the representation on the thickened boundary.
The subspace $P_M$ takes a similar role as a low-energy subspace or code subspace in models of 
holography in terms of quantum error correcting codes \cite{Harlow,Pastawski2015,Yang2016,Hayden2016,Pastawski2017,Harlow2017,Jahn}. 
Indeed, we can understand $P_M$ as the low-energy subspace of the \emph{entanglement} or \emph{modular} Hamiltonian $K_A \coloneqq -\log(\rho_A)$.  
It would be interesting to see in concrete physical models whether local operators are well approximated by operators that commute with $P_M$, making it possible to compute correlation functions directly in the representation on the boundary. 
By taking advantage of the freedom in choosing $U_A$ (the choice of basis $\{\ket{\tilde j}\}$), one could further try to optimize $U_A$ in such a way that the boundary description is robust against loss of local regions on the boundary as in AdS/CFT. We leave these topics to future work.  

\emph{From entropy to energy. } We have shown that by taking $M=d^{\lceil S(A)/\varepsilon\rceil}$, the state vector $\ket{\psi_M}$ approximates $\ket{\psi}$ to accuracy $\varepsilon>0$. 
This automatically implies that they have similar energy \emph{density}, since 
\begin{align}
	\left|\langle \psi_M | H|\psi_M\rangle - \langle \psi | H|\psi\rangle\right|\leq 2\sqrt{\varepsilon} \norm{H},
\end{align}
where $\norm{H}$ denotes the operator norm of $H$, which scales extensively for local, bounded Hamiltonians. 
Nevertheless, the total energy difference might still diverge linearly with the system size $|\Lambda|$.
We now show a stronger result: 
The total energy in $\ket{\psi}$ and $\ket{\psi_M}$ differ by a constant (if $A$ is held fixed) and therefore the energy densities of $\ket{\psi}$ and $\ket{\psi_M}$ differ by an amount of order $1/|\Lambda|$. 
The proof of the following Theorem is provided in Appendix~\ref{app:energy}.

\begin{theorem}[Energetic area law]\label{thm:energy} Let $\ket{\psi_M}$ be defined as before with $M\geq d^{\lceil S(A)/\varepsilon\rceil}$. If $H$ is a uniformly bounded, local Hamiltonian with ground state vector $\ket{\psi}$, there exists a constant $h$ such that
	\begin{align}
		\left|\langle \psi_M | H|\psi_M\rangle - \langle \psi | H|\psi\rangle\right|\leq \left({\frac{\varepsilon }{1-\varepsilon}}\right)^{1/2} h\,
		|\partial A|. 
	\end{align}
\end{theorem}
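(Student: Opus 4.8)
The plan is to reduce the energy difference to a sum over only the boundary-crossing terms of $H$, by exploiting that $\ket{\psi}$ is an exact eigenvector. Write $E_0=\langle\psi|H|\psi\rangle$ and $\alpha=\langle\psi|P_M|\psi\rangle=\sum_{k=1}^M p_k$; by Lemma~\ref{lemma:main} together with the choice $M\geq d^{\lceil S(A)/\varepsilon\rceil}$ one has $\log(M)\geq S(A)/\varepsilon$ and hence $\alpha\geq 1-\varepsilon$. Since $\ket{\psi_M}=\alpha^{-1/2}P_M\ket{\psi}$ and $\langle\psi_M|\psi_M\rangle=1$, the quantity to bound is
\[
\Delta:=\langle\psi_M|H|\psi_M\rangle-\langle\psi|H|\psi\rangle=\frac{1}{\alpha}\langle\psi|P_M(H-E_0)P_M|\psi\rangle .
\]
First I would use the eigenvalue equation $H\ket{\psi}=E_0\ket{\psi}$, which gives $\langle\psi|P_M H|\psi\rangle=E_0\alpha$, together with $P_M^2=P_M$, to rewrite the sandwiched operator as a single commutator,
\[
\langle\psi|P_M(H-E_0)P_M|\psi\rangle=\langle\psi|P_M[H,P_M]|\psi\rangle .
\]
This is the step that makes locality usable, since now only the commutator with $P_M$ enters.

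Next I would insert the local decomposition $H=\sum_Z h_Z$ and classify the terms by their support relative to $A$. Terms with $Z\subset A^c$ commute with $P_M$ and drop out immediately. The essential observation is that the interior terms $Z\subset A$, although extensively many, contribute nothing: for $O_A\equiv h_Z$ supported in $A$ one has $\langle\psi|P_M[O_A,P_M]|\psi\rangle=\tr\big((P_M\rho_A P_M-\rho_A P_M)O_A\big)$, and because $P_M$ is a spectral projector of $\rho_A$ we have $[\rho_A,P_M]=0$, whence $P_M\rho_A P_M=\rho_A P_M$ and the whole expression vanishes. Thus only the terms straddling the boundary survive,
\[
\Delta=\frac{1}{\alpha}\sum_{Z:\,Z\cap A\neq\emptyset,\,Z\cap A^c\neq\emptyset}\langle\psi|P_M[h_Z,P_M]|\psi\rangle .
\]

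Finally I would bound each surviving term by Cauchy--Schwarz. Writing $P_M[h_Z,P_M]=-P_M h_Z(\one-P_M)$ and setting $|u\rangle=P_M|\psi\rangle$, $|v\rangle=(\one-P_M)|\psi\rangle$ with $\||u\rangle\|=\sqrt{\alpha}$ and $\||v\rangle\|=\sqrt{1-\alpha}$, one gets $|\langle\psi|P_M[h_Z,P_M]|\psi\rangle|=|\langle u|h_Z|v\rangle|\leq \|h_Z\|\sqrt{\alpha(1-\alpha)}$. Using uniform boundedness $\|h_Z\|\leq J$ and that a regular lattice has at most $c\,|\partial A|$ boundary-crossing terms, the triangle inequality yields $|\Delta|\leq \alpha^{-1}\sqrt{\alpha(1-\alpha)}\,cJ\,|\partial A|=\sqrt{(1-\alpha)/\alpha}\,cJ\,|\partial A|$. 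Since $1-\alpha\leq\varepsilon$ and $t\mapsto t/(1-t)$ is increasing, $\sqrt{(1-\alpha)/\alpha}\leq\sqrt{\varepsilon/(1-\varepsilon)}$, so setting $h=cJ$ gives the claimed bound.

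The \emph{main obstacle} is exactly the interior contribution: truncating the Schmidt spectrum perturbs every local term inside $A$, so a priori $\Delta$ looks extensive in $|A|$. The resolution is that the eigenstate property lets one replace $(H-E_0)$ by the single commutator $[H,P_M]$, after which $[\rho_A,P_M]=0$ annihilates all bulk terms and leaves only the $O(|\partial A|)$ boundary terms. Beyond this, the only care needed is the bookkeeping of which terms cross $\partial A$ and the observation that the overlap $\alpha$ simultaneously controls the normalisation and the norm $\|(\one-P_M)\ket{\psi}\|=\sqrt{1-\alpha}$.
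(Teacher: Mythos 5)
Your proposal is correct and follows essentially the same route as the paper: both reduce $\langle\psi_M|H|\psi_M\rangle-\langle\psi|H|\psi\rangle$ to a commutator expectation $\langle\psi|P_M[H,P_M]|\psi\rangle/\langle\psi|P_M|\psi\rangle$ via the eigenstate property (the paper sets $E_0=0$ without loss of generality where you keep $E_0$ explicit), annihilate all bulk terms using $[\rho_A,P_M]=0$ together with cyclicity of the trace, and bound the surviving boundary-straddling part by Cauchy--Schwarz with $\lVert(\one-P_M)\ket{\psi}\rVert\leq\sqrt{\varepsilon}$ and $\langle\psi|P_M|\psi\rangle\geq 1-\varepsilon$. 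The only cosmetic difference is that you sum the at most $c\,|\partial A|$ crossing terms $h_Z$ individually while the paper groups them into a single operator $H_{\partial A}$ with $\lVert H_{\partial A}\rVert\leq h\,|\partial A|$; your write-up is in fact slightly cleaner, since the paper's line ``$P_M^2=\one$'' is a typo for $P_M^2=P_M$, which you use correctly.
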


\emph{Relation to tensor network states. } 
In recent years, \emph{tensor network states} have received a lot of attention as variational ansatz states for 
ground states of gapped many-body systems 
\cite{Orus-AnnPhys-2014,AreaReview,VerstraeteBig,SchuchReview,PhysRevLett.105.010502}
as well as providing concrete toy models for the AdS/CFT correspondence 
\cite{PhysRevD.86.065007,Pastawski2015,Lee2016,Yang2016,Hayden2016,Pastawski2017,WaveletsMERA,Jahn}. 
One of the reasons that specifically \emph{projected entangled pair states  (PEPS)} and states from 
\emph{multiscale entanglement 
renormalization (MERA)} \cite{MERA1} in spatial dimensions beyond unity, which can be efficiently embedded in PEPS \cite{PhysRevLett.105.010502},
 are believed to be a good ansatz class for ground states of gapped many-body systems is that they automatically fulfill an area law. 
 In fact, they fulfill something significantly stronger, namely an area law in terms of the rank of reduced density matrices,
\begin{align}
	S_0(A)_{\rho^{\mathrm{PEPS}}} \leq k | \partial A |,
\end{align}
where $S_0(A)_\rho \coloneqq \log(\mathrm{rank}(\rho_A))$ is the $0$-R\'enyi entropy and $\rho^{\mathrm{PEPS}}$ is the density matrix of the PEPS.
This means that PEPS can be holographically compressed \emph{exactly} onto a thickened boundary, as in \eqref{eq:ideal}. 
Despite the sucess of PEPS, it is in fact known that in general an area law in terms of the von~Neumann entropy does \emph{not} suffice for a state to be well approximable by a PEPS \cite{1411.2995,Footnote1AreaLaw}. 
Our results show that even though one cannot in general find a good PEPS approximation, another property of PEPS -- the ability to compress the quantum information in some region into its boundary -- is indeed implied by an area law. 
Furthermore, we find that pure states with small Schmidt-rank can indeed in principle be used to approximate the ground state of a local many-body system to high precision whenever this system fulfills an area law. The key difference to the case of PEPS is that the approximation $\ket{\psi_M}$ has a small Schmidt-rank over a fixed bipartition of the system, whereas PEPS have small Schmidt-rank for \emph{any} bipartition of the system.

\emph{Holographic compression for non-interacting bosons. }
So far, all our results were applied to arbitrary spin lattice models. 
For ground states of \emph{local, gapped, harmonic 
Hamiltonian models}, holographic compression is particularly interesting, not the least because in this situation the area law has actually been proven in all generality \cite{Area,GeneralBosonicAreaLaw,NJPAreaLaw}. 
For such systems, the above statement still holds true, with a proof analogous to the previous one, since Lemma~\ref{lemma:main} also holds for infinite dimensional discrete probability distributions. The transformation that implements
the holographic compression, however, will in general not be a Gaussian operation \cite{Continuous,GaussianQuantumInfo}, which limits its physical interpretation. 
Interestingly, for such models holographic compression with Gaussian operations holds true as well, 
but now derived from the Hamiltonian model. 
Moreover, the width of the thickened boundary has a much weaker scaling with the error $\varepsilon$ when compared to the case of arbitrary spin systems. 
In the following, we specifically consider families of Hamiltonian models of the form
\begin{equation}\label{frmt}
H = \mathbf{P}  \mathbf{P}^T/2  + \mathbf{X} V \mathbf{X}^T ,
\end{equation}
with $\mathbf{X}=(X_1,\dots, X_{|\Lambda|})$ and $\mathbf{P}=(P_1,\dots, P_{|\Lambda|})$ denoting the coordinates and momenta and allowing for arbitrary finite-ranged couplings between the 
position coordinates (our results generalize to models with momentum-couplings). 
\begin{theorem}[Gaussian bosonic  holographic compression]\label{thm:bosonic} Consider ground states of gapped, finite-ranged Hamiltonians of the form
as in Eq.\ (\ref{frmt}) on a regular lattice. Then there exist constants $C_1,C_2>0$
so that for  any region $A$, any $\varepsilon>0$ and any width 
       \begin{align}
	       l\geq l_A\left(C_1 (\log\left(C_2/\varepsilon\right)+\frac{5}{4}\log(|A|))\right),
       \end{align}
there exists a Gaussian unitary $U_A$ acting only in $A$ which disentangles 
the bulk of $A$ from the complement of $A$,
\begin{align}
	U_A \ket{\psi} \approx_\varepsilon |\chi\rangle\otimes \ket{\emptyset}, 
\end{align}
where $|\chi\rangle$ is supported on $A^c\cup \partial_{l}A$ and $\ket{\emptyset}$ is the vacuum 
state in the bulk.
\end{theorem}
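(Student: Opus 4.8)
\emph{Proof proposal.\ } The plan is to reduce the claim to a quantitative control of the symplectic spectrum of the reduced covariance matrix, and then to exploit the freedom in choosing a Gaussian $U_A$ in direct analogy with the spin case. Since $H$ in \eqref{frmt} is quadratic, its ground state $\ket{\psi}$ is a pure Gaussian state with covariance matrix $\gamma=V^{-1/2}\oplus V^{1/2}$ (position and momentum blocks, up to conventional factors), and the reduced state $\rho_A$ is Gaussian with covariance matrix $\gamma_A=(V^{-1/2})_A\oplus(V^{1/2})_A$, the restrictions to the sites of $A$. By Williamson's theorem there is a symplectic transformation on $A$ -- hence a Gaussian unitary $U_A$ supported in $A$ -- bringing $\gamma_A$ to the normal form $\bigoplus_k\nu_k\one_2$ with symplectic eigenvalues $\nu_k\geq1$; in the associated normal modes the global pure state factorizes across $A:A^c$ into a product of two-mode squeezed vacua $\ket{\mathrm{TMS}(r_k)}$ with $\cosh(2r_k)=\nu_k$, tensored with vacuum on the unentangled modes. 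Using the remaining symplectic freedom, I would choose $U_A$ so that the $N$ modes with the largest $\nu_k$ are placed on sites of $\partial_l A$ and the remaining modes on the bulk. Defining $\ket{\chi}\otimes\ket{\emptyset}$ by keeping these $N$ squeezed pairs and replacing the rest by vacuum, the overlap factorizes,
\begin{align}
\big|\langle\psi|U_A^\dagger|\chi\rangle\otimes|\emptyset\rangle\big|^2=\prod_{k>N}\frac{2}{\nu_k+1}=\exp\Big(-\sum_{k>N}\log\tfrac{\nu_k+1}{2}\Big),
\end{align}
since $|\langle 0,0|\mathrm{TMS}(r)\rangle|^2=1/\cosh^2 r=2/(\nu+1)$. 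It therefore suffices to show that $N\leq C_1\big(\log(C_2/\varepsilon)+\tfrac54\log(|A|)\big)|\partial A|$ modes make the tail sum smaller than $\log\tfrac{1}{1-\varepsilon}$, since then any $l\geq l_A(\cdots)$ provides $|\partial_l A|\geq N$ sites to host the kept modes.

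The quantitative input comes from the Hamiltonian. The spectral gap bounds $V$ from below, $V\geq g^2\one$, while finite range makes $V$ banded and bounds it from above; consequently the matrix elements of the analytic functions $V^{\pm1/2}$ decay exponentially, $|(V^{\pm1/2})_{xy}|\leq C\,e^{-\mu\,\dist(x,y)}$, by a Combes--Thomas estimate with $\mu$ controlled by the gap and the interaction range. The symplectic eigenvalues satisfy that $\nu_k^2$ are the eigenvalues of $(V^{-1/2})_A(V^{1/2})_A$, and since restriction does not commute with multiplication,
\begin{align}
(V^{-1/2})_A(V^{1/2})_A=\one_A-F,\qquad F=(V^{-1/2})_{A,A^c}(V^{1/2})_{A^c,A}\preceq0,
\end{align}
so that $\mu_k\coloneqq\nu_k^2-1\geq0$ are the eigenvalues of $-F$. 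Because the kernel of $F$ is a sum over $A^c$ of products of exponentially decaying factors, $F_{xy}$ is appreciable only when both $x,y\in A$ lie within a few decay lengths of $\partial A$; in particular $\Tr(-F)\leq C'|\partial A|$, which already reproduces the area law $S(A)=\sum_k g(\nu_k)\lesssim|\partial A|$.

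The heart of the argument is to convert this boundary localization into a decay of the ordered eigenvalues $\mu_k$. I would truncate to the thickened boundary, writing $-F=P_{\partial_w A}(-F)P_{\partial_w A}+R_w$ with $\norm{R_w}\leq C''e^{-\mu w}$ and $\mathrm{rank}\,P_{\partial_w A}(-F)P_{\partial_w A}\leq|\partial_w A|$, so that Weyl's inequality gives $\mu_k\leq C''e^{-\mu w}$ for every $k>|\partial_w A|$. Choosing $N=|\partial_w A|$ and bounding $\log\tfrac{\nu_k+1}{2}\leq\tfrac14\mu_k$ for small $\mu_k$, the tail is controlled by
\begin{align}
\sum_{k>N}\log\tfrac{\nu_k+1}{2}\leq\tfrac14\sum_{k>N}\mu_k\leq\tfrac14\,|A|\,C''e^{-\mu w},
\end{align}
using that there are at most $|A|$ modes in total. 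Demanding this be $\leq\log\tfrac{1}{1-\varepsilon}$ forces $w\sim\mu^{-1}\big(\log(1/\varepsilon)+\log|A|\big)$, whence $N=|\partial_w A|\lesssim\mu^{-1}\big(\log(1/\varepsilon)+\log|A|\big)|\partial A|$; tightening the constants yields the stated form $l\geq l_A\big(C_1(\log(C_2/\varepsilon)+\tfrac54\log|A|)\big)$.

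The main obstacle is precisely this last step: establishing the Combes--Thomas exponential decay of $V^{\pm1/2}$ with explicit constants and then turning the boundary localization of $F$ into the uniform tail bound $\mu_k\leq C''e^{-\mu w}$ for $k>|\partial_w A|$ through a rank-plus-small-norm splitting. The appearance of $\log|A|$ is genuine and traces back to the fact that all $|A|$ bulk modes sit slightly above the vacuum, so that the \emph{aggregate} rather than individual deviation must be suppressed, which forces the boundary width to grow logarithmically in the total number of modes; the precise coefficient $\tfrac54$ then emerges from optimizing the fidelity estimate. The remaining ingredients -- Williamson's theorem, the two-mode-squeezed overlap, and the realization of chosen normal modes on chosen sites by a Gaussian unitary -- are standard, and Gaussianity of $U_A$ is exactly what distinguishes this argument from the generic route through Lemma~\ref{lemma:main}.
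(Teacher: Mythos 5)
Your proposal is correct in outline and reaches the stated bound (in fact a slightly stronger one), but it controls the key spectral estimate by a genuinely different route than the paper. Both arguments share the same endgame: Williamson normal form implemented by a Gaussian unitary on $A$, using the residual symplectic freedom to host the most entangled normal modes on $\partial_l A$, and the product overlap of two-mode squeezed pairs with vacuum, bounded via $\log\left(\frac{\nu+1}{2}\right)\leq \frac{\nu-1}{2}$ (the paper's truncation formula, Lemma~\ref{tf}). Where you diverge is in bounding the tail of the symplectic spectrum of $\gamma_A$. The paper never touches $V^{\pm 1/2}$ directly: it imports exponential decay of the off-diagonal covariance block $\Xi$ (Lemma~\ref{ed}), truncates $\Xi$ at distance $l$, and combines the symplectic \emph{interlacing theorem} (deleting the $O(l\,|\partial A|)$ sites of $A^c$ correlated with $A$) with the Bhatia--Jain \emph{perturbation bound} for symplectic eigenvalues; the square root in that perturbation bound is the origin of the $|A|^{1/4}$ factor and hence of the coefficient $\frac{5}{4}$ of $\log|A|$ in the theorem. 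You instead exploit the explicit ground-state structure $\gamma_A=(V^{-1/2})_A\oplus(V^{1/2})_A$, correctly identify $\nu_k^2$ with the spectrum of $(V^{-1/2})_A(V^{1/2})_A=\one_A-F$, and use Combes--Thomas decay of $V^{\pm1/2}$ plus a rank-plus-small-norm splitting of $-F$. This buys a sharper estimate --- your $\log|A|$ carries coefficient $1$ rather than $\frac{5}{4}$, which implies the stated theorem a fortiori --- at the price of being tied to Hamiltonians of the exact form \eqref{frmt}; the paper's covariance-level argument needs only exponential decay of $\Xi$ and therefore extends directly to momentum couplings and to general Gaussian states with clustering correlations.

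One step of yours needs repair, though it is not fatal. Weyl's inequality does not apply to $-F$ as written: $F$ is not symmetric, and ``$F\preceq 0$'' is not meaningful as an operator inequality (only the spectrum of $\one_A-F$ is real and $\geq 1$, since the product of two positive matrices is similar to a positive one). The fix is to conjugate: $G \coloneqq (V^{-1/2})_A^{1/2}\,(V^{1/2})_A\,(V^{-1/2})_A^{1/2}$ is symmetric with the same spectrum as $\one_A-F$; the boundary-supported term keeps its rank under this similarity, and the remainder's operator norm is inflated by at most the condition number of $(V^{-1/2})_A$, which is bounded in terms of the gap and $\norm{V}$, so the Weyl argument then goes through with modified constants. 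A second, cosmetic point: your single-pair vacuum overlap $2/(\nu+1)$ and the paper's $\left(2/(d+1)\right)^2$ differ by a conventional square; this only renormalizes $C_1,C_2$, but you should fix one convention and keep it consistent through the fidelity estimate.
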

The proof of this result is presented in Appendix~\ref{app:gaussian}.  
While the width of the thickened boundary now very slightly depends on the size of the region $A$ through $\log(|A|)$, the dependence on the error $\varepsilon$ only scales as $\log(1/\varepsilon)$, which is a strong improvement compared to the general result on spin systems, where the error scales as $1/\varepsilon$. 
This suggests that also in the case of spin systems an improved scaling with $\varepsilon$ can be achieved if one incorporates more properties of ground states of local Hamiltonians than just the area law.

\emph{Conclusions. } We provided an operational single-shot interpretation of the area law by showing that the quantum information contained in a region of space can be compressed onto its thickened boundary, with a small error independent of the system-size. 
For spin systems, our results hold without any further assumption on the state and Hamiltonian besides the fact that the state fulfills an area law.
The results follow from an inequality that shows that any probability distribution with Shannon entropy $H$ can be approximated to precision $\varepsilon$ by one with support of dimension $\exp(H/\varepsilon)$. 
We believe that this inequality will be useful beyond the application considered here when low-entropic quantum states on high dimensional spaces have to be approximated.   
For spin systems we also discussed the emergence of an approximate correspondence between bulk and boundary operators by showing that correlation functions of bulk-operators that (approximately) commute with the projection onto the low-energy subspace of the entanglement Hamiltonian can be calculated in the representation on the thickened boundary. 

For non-interacting, gapped bosonic systems we were able to show that the compression unitary can be taken as a Gaussian unitary and that the width of the thickened boundary can be much smaller than in the generic case (proportional to $\log(1/\varepsilon)$ instead of $1/\varepsilon$, albeit with a logarithmic dependence on the size of the region). This shows that our general results can be improved by taking into account more concrete properties of the Hamiltonian model than just the area law. 
While the area law has been demonstrated to hold in large classes of models, it still lacks a general proof for arbitrary gapped lattice models.
Approximate holographic compression provides a physically relevant, but weaker notion of the area law, which might admit a proof in full generality. It is the hope that the present work invites such endeavors.

\emph{Acknowledgements. } H.~W. would like to thank Renato Renner and Joseph M. Renes for interesting discussions regarding Lemma~\ref{lemma:main}.
J.~E. thanks the ERC (TAQ), the DFG (CRC 183 project B01, EI 519/14-1, EI 519/14-1, EI 519/7-1), and the Templeton Foundation 
for funding. 
This work has also received funding from the European Union's Horizon 2020
		research and innovation programme under grant agreement No 817482 (PASQuanS).
H.~W. further acknowledges support from the Swiss National Science Foundation through SNSF project
No. 200020\_165843 and through the National Centre of Competence in Research \emph{Quantum Science and Technology} (QSIT). 
This research was supported in part by Perimeter Institute for Theoretical Physics. Research at Perimeter Institute is supported by the 
Government of Canada through the Department of Innovation, Science and Economic Development and by the Province of Ontario through the Ministry of Research and Innovation. 

\bibliographystyle{apsrev4-1}

%

\appendix
\section{Proof of Lemma~2}
\label{app:lemma2}
Here, we provide the proof of Lemma~2. 
Denote by $q_M$ the smallest entry of $\mathbf{q}$. Then $r_j\leq q_M$ for all $j$. In the following we will use several times that 
$x\mapsto -\log(x)$ is a positive, monotonously decreasing function on $[0,1]$ and that $H(\mathbf{p}) = H(\mathbf{q}) + H(\mathbf{r})$ and $\tr(\mathbf{p}) \coloneqq  \sum_i p_i = \tr(\mathbf{q}) + \tr(\mathbf{r}) =1$. First we lower bound the entropy of $\mathbf{r}$ as
\begin{align}
	H(\mathbf{r}) &= \sum_j r_j (-\log(r_j)) \geq \sum_j r_j (-\log(q_M))\\
				       &= -\tr(\mathbf{r})\log(q_M) = (1-\tr(\mathbf{q}))(-\log(q_M)).\nonumber
\end{align}
This gives 
\begin{equation}
	\sum_{i=1}^M p_i = \tr(\mathbf{q}) \geq 1 - \frac{H(\mathbf{r})}{-\log(q_M)} = 1 - \frac{H(\mathbf{p})-H(\mathbf{q})}{-\log(q_M)}.
\end{equation}
Using $1\geq \tr(\mathbf{q}) \geq M q_M$ and hence $-\log(q_M) \geq \log(M) - \log(\tr(\mathbf{q}))$ then yields the first inequality. 
The second inequality follows because both $H(\mathbf{q})$ and $- \log(\tr(\mathbf{q}))$ are positive.
It is interesting to acknowledge that in this derivation, one completely disregards the term $H(\mathbf{q})$ in the last step, but one still gets a bound sufficiently tight for the arguments for holographic compression to work.

\section{Counter-example for R\'enyi entropies with $\alpha>1$}
\label{app:counterexample}
Here, we show by example that an analogous result to Lemma~2 does not hold in general for any R\'enyi entropy with $\alpha>1$.
Recall that, given a distribution $\mathbf{p}$ on a space $X$, the classical R\'enyi entropies are defined by
\begin{align}
	H_\alpha(X) \coloneqq \frac{1}{1-\alpha}\log\left(\sum_{i\in X} p_i^\alpha \right) 
\end{align}
and the R\'enyi divergence $S_\alpha$ of a quantum state is given by the classical R\'enyi entropy of the spectrum of the quantum state.

In the following, we construct a specific probability distribution on $d^{|A|}$ events which fulfills
\begin{align}
	H_\alpha \leq k(\alpha) |\partial A|, \quad \forall \alpha>1,
\end{align}
and show that to approximate it to error $\varepsilon$ by one with rank $M$ requires $M$ to grow like $d^{|A|}$. 
Here and in the following, we drop the argument in the entropy function, since we always consider the entropy of the full distribution. 
We will use that the R\'enyi entropies for $\alpha>1$ fulfill the inequality \cite{Wilming2018}
\begin{align}
	H_\alpha \geq H_\infty \geq \frac{\alpha-1}{\alpha}H_\alpha.
\end{align}
This inequality shows that the scaling behavior of $H_\alpha$ (with system size) is independent of $\alpha$, up to a multiplicative factor that depends on $\alpha$. It hence suffices to study the case $\alpha\rightarrow\infty$, which is given by
\begin{align}
	H_\infty = - \log(p_{\mathrm{max}}),
\end{align}
where $p_{\mathrm{max}}$ is the largest entry of $\mathbf p$, or, in the quantum case, the largest eigenvalue of the corresponding density operator.
Now consider a quantum state on region $A$ with ordered spectrum
\begin{align}
	\mathbf p = \left(d^{-k |\partial A|}, \frac{1- d^{-k |\partial A|}}{d^{|A|}-1},\ldots,\frac{1- d^{-k |\partial A|}}{d^{|A|}-1}\right),
\end{align}
for some constant $k$. Then we have $H_\infty = k |\partial A|$ and hence
\begin{align}
	H_\alpha \leq \frac{\alpha}{\alpha-1}k |\partial A|,\quad \forall \alpha > 1.
\end{align}
However,
\begin{align}
	\sum_{i=1}^M p_i = d^{-k |\partial A|} + (M-1)\frac{1- d^{-k |\partial A|}}{d^{|A|}-1}.
\end{align}
If this quantity is supposed to be larger than $1-\varepsilon$, this requires
\begin{align}
	M \geq d^{|A|}\left[1 - \frac{\varepsilon}{1-d^{-k |\partial A|}}\right]\geq d^{|A|}(1-2\varepsilon),
\end{align}
where the last inequality holds for regions $A$ large enough such that $d^{-k|\partial A|}\leq 1/2$. We deduce that $\log(M)$ has to scale like the volume of $A$ and holographic compression cannot be achieved.

Here, we have simply written down a probability distribution by hand. It is not clear whether such probability distributions appear in realistic, 
local physical models. The purpose of this example was simply to demonstrate that if holographic compression is possible in systems which fulfill an area law for R\'enyi entropies with $\alpha>1$, then additional properties of the distribution have to come into play (for example, that they also fulfill an area law for $\alpha=1$).  

\section{Proof of Theorem~4}
\label{app:energy}
In this section, we provide the proof of Theorem~4. 
We split up the Hamiltonian as $H=H_A + H_{\partial A} + H_{A^c}$, where $H_A$ and $H_{A^c}$ contain the Hamiltonian terms that are supported fully within $A$ and $A^c$, respectively.
Since $H$ is a local, uniformly bounded Hamiltonian by assumption, we have $\norm{H_{\partial A}}\leq h |\partial A|$ for some constant $h$.
For the purpose of this proof, let us write $P_M\otimes\one$ simply as $P_M$ and let us assume without loss of generality that $H\ket{\psi}=0$.
Since $P_M$ is supported within $A$, we then have
\begin{align}
H P_M = [H_A + H_{\partial A},P_M] + P_M H.
\end{align}
This implies
\begin{align}
        \langle \psi_M |H|\psi_M\rangle &= \frac{\langle \psi | P_M H P_M|\psi\rangle}{\langle \psi|P_M|\psi\rangle}\nonumber\\
       &=\frac{\langle \psi | P_M [H_A + H_{\partial A},P_M]|\psi\rangle}{\langle \psi|P_M|\psi\rangle}.
\end{align}
Since $[\rho_A,P_M]=0$ and by the cyclicity of the trace we furthermore have
\begin{align}
        \langle\psi| P_M[H_A,P_M]|\psi\rangle = \tr(\rho_A P_M[H_A,P_M]) = 0.
\end{align}
We can therefore write
\begin{align}
        \langle \psi_M| H |\psi_M\rangle &= \frac{\langle \psi | P_M [H_{\partial A},P_M]|\psi\rangle}{\langle \psi|P_M|\psi\rangle}. \\
\end{align}
Using $P_M^2=\one$, we then have
\begin{align}
	\langle \psi | P_M [H_{\partial A},P_M]|\psi\rangle &\leq \norm{P_M \ket{\psi}}\norm{H_{\partial_A}(\ket{\psi}-P_M\ket{\psi})}\nonumber \\
                                                &\leq \norm{P_M\ket{\psi}} \norm{H_{\partial A}} \norm{\ket{\psi}-P_M\ket{\psi}}\nonumber \\
                                            &\leq \norm{P_M\ket{\psi}} h\,|\partial A| \sqrt{\epsilon}.
\end{align}
The result then follows using $\langle \psi|P_M|\psi\rangle = \norm{P_M\ket{\psi}}^2 \geq 1-\epsilon$.

\section{Gaussian holographic compression in bosonic systems}
\label{app:gaussian}
In this section, we provide the proof for Theorem~5.
On the lattice $\Lambda$, we consider ground states of families of \emph{gapped, finite-ranged, harmonic 
Hamiltonian models} of the form
\begin{equation}\label{fr}
H = \mathbf{P}  \mathbf{P}^T/2  + \mathbf{X} V \mathbf{X}^T ,
\end{equation}
with $\mathbf{X}=(X_1,\dots, X_{|\Lambda|})$ and $\mathbf{P}=(P_1,\dots, P_{|\Lambda|})$, allowing for arbitrary finite-range couplings between the 
position coordinates. Importantly, for such models, the existence of a gap is \emph{equivalent} to exponential decay of correlations and implies an area law \cite{NJPAreaLaw}.
The generalization to models also involving arbitrary couplings between momentum coordinates
is straightforward; for an in-depth discussion of such general harmonic models, see Ref.\ \cite{NJPAreaLaw}.

 The ground states of the harmonic models that we consider are pure Gaussian states. We will therefore pursue the discussion of holographic compression in these systems in terms of second moments
\cite{Continuous,GaussianQuantumInfo}. The second moments of $2m$ canonical coordinates $\mathbf{R}= (X_1,P_1,\dots, X_m, P_m)$ can be captured in terms of a real symmetric, positive definite covariance matrix with entries
\begin{equation}
	\gamma_{j,k} = {\rm tr}(\rho (R_j R_k+ R_k R_j) )
\end{equation}
for $j,k=1,\dots, 2m$.
We start by collecting a number of statements that will be made use of in the 
main argument. Particularly important will be the concept of a  symplectic eigenvalue.
An arbitrary covariance matrix
$\gamma$ of $m$ modes can always be brought into \emph{Williamson normal
form}, which means that there exists an $S\in Sp(2m,\RR)$ so that
\begin{equation}
	S \gamma S^T  = \text{diag} (d_1,d_1,\dots, d_m,d_m),
\end{equation}
with $d_1,\dots, d_m\geq 1$. Here, $Sp(2m,\RR)$ denotes the group of $2m \times 2m$ symplectic matrices, i.e., exactly those matrices that fulfill
\begin{align}
S\sigma S^T = \sigma,
\end{align}
where $\sigma$ is the symplectic form
\begin{equation}\label{sym}
	\sigma=\bigoplus_{j=1}^{m}
	\left[\begin{array}{cc}
	0 & 1\\
	-1 & 0
	\end{array}
	\right].
\end{equation}
The numbers $d_1,\dots, d_m$ are referred to as \emph{symplectic eigenvalues} of $\gamma$. 
In the following, we will need to order the $d_j$ in non-increasing order and non-decreasing order. 
We will therefore write $d_j^\uparrow$ for the $j$-th smallest symplectic eigenvalue and $d_j^\downarrow$ for the $j$-th largest symplectic eigenvalue. 
If the matrix in question needs to be specified explicitly, we give it as an argument to the symplectic eigenvalues, for example, 
$d^\downarrow_j(A)$ if the matrix in question is called $A$.  

The full covariance matrix $\gamma$ of the global pure Gaussian state defined on the lattice $\Lambda$ can be partitioned as
\begin{equation}\label{gamma}
	\gamma=\left[\begin{array}{cc}
	\gamma_A & \Xi\\
	\Xi^T & \gamma_{A^c}\end{array}\right], 
\end{equation}
according to sites in $A$ and $A^c$, respectively, $\Xi$ capturing correlations. By construction, the symplectic
spectrum of $\gamma$, so the eigenvalues of $|\sigma\gamma| $, are all given by $1$, as the quantum state is pure. 
As we will see, due to the strong correlation decay featured in $\Xi$, the symplectic spectrum of $\gamma$ will only be little disturbed by $\Xi$.
We will then use this fact to show that the ground state vector $\ket{\psi}$ can be brought close to a product vector by a local unitary.
The mentioned correlation decay is captured by the following Lemma, which follows from the results in Ref.\ \cite{NJPAreaLaw}.

\begin{lemma}[Exponential decay of covariance matrix entries]\label{ed} For all ground states of gapped, local harmonic models
of the form (\ref{fr}), the entries of $\Xi$ are exponentially decaying, in that there exist $c_1,c_2>0$
such that
\begin{equation}
	|\Xi_{j,k}|  \leq c_1 e^{-c_2 l}
\end{equation}
where $l={\rm dist}(j,k)$ is the lattice distance between sites $j$ and $k$.
$c_1, c_2$ depend on $\|V\|$, the range of the Hamiltonian, its spectral gap and the dimension of the lattice $\Lambda$ only.
\end{lemma}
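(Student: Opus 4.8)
\emph{Proof sketch.} The plan is to reduce the statement to the exponential decay of the entries of the matrix functions $V^{1/2}$ and $V^{-1/2}$, and then to establish the latter from the finite range of $V$ together with the gap. First I would use that the ground state of a Hamiltonian of the form \eqref{fr} is a pure Gaussian state whose covariance matrix, in the ordering $\mathbf{R}=(X_1,P_1,\dots)$, is block-diagonal between the position and momentum sectors: the position-position block is a constant multiple of $V^{-1/2}$, the momentum-momentum block a constant multiple of $V^{1/2}$, and the position-momentum block vanishes. After the partition \eqref{gamma} into $A$ and $A^c$, every entry of the off-diagonal block $\Xi$ connecting a site $j\in A$ to a site $k\in A^c$ is then, up to a numerical constant, an entry $(V^{1/2})_{j,k}$ or $(V^{-1/2})_{j,k}$, so it suffices to show that these decay exponentially in $\dist(j,k)$.

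For this I would combine two ingredients. Since $V$ is finite-ranged with range $R$, we have $V_{j,k}=0$ whenever $\dist(j,k)>R$, and hence $(p_n(V))_{j,k}=0$ for every polynomial $p_n$ of degree $n$ once $\dist(j,k)>nR$. At the same time, the gap forces $\spec(V)\subset[a,b]$ with $a=\lambda_{\min}(V)$ bounded below by a constant times the square of the spectral gap and $b=\lambda_{\max}(V)=\nnorm{V}$. Because $a>0$, the functions $x\mapsto x^{\pm1/2}$ are analytic on a complex neighbourhood of $[a,b]$, so by the standard Chebyshev (Bernstein-ellipse) estimate they admit uniform polynomial approximants $p_n$ of degree $n$ on $[a,b]$ with error $C\rho^{-n}$, where $\rho>1$ depends only on the ratio $a/b$.

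I would then fix sites $j\in A$, $k\in A^c$ at distance $l=\dist(j,k)$ and pick the largest $n$ with $nR<l$, so that $(p_n(V))_{j,k}=0$ and, by the spectral theorem,
\begin{align}
	\bigl|(V^{\pm1/2})_{j,k}\bigr| = \bigl|(V^{\pm1/2}-p_n(V))_{j,k}\bigr| \le \norm{V^{\pm1/2}-p_n(V)} \le C\rho^{-n}.
\end{align}
This yields $|\Xi_{j,k}|\le c_1 e^{-c_2 l}$ with $c_2=(\log\rho)/R$ and $c_1$ absorbing the numerical constants. The rate $c_2$ depends only on $a/b$, i.e.\ on the gap and on $\nnorm{V}$, and on the range $R$, while the lattice dimension enters only through a combinatorial constant counting lattice paths that I would fold into $c_1$; this matches the claimed dependence.

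The hard part will be keeping the constants uniform: the approximation rate $\rho$ degrades as $a/b\to0$, i.e.\ as the gap closes relative to $\nnorm{V}$, so one must invoke the uniform lower bound on $a$ from the gap and the uniform upper bound $b=\nnorm{V}$ to ensure $c_1,c_2$ depend only on the stated quantities. An alternative route I would consider, which makes the gap dependence fully explicit, is the resolvent (Combes--Thomas) method: writing $V^{-1/2}=\tfrac{1}{\pi}\int_0^\infty (V+t)^{-1}\,t^{-1/2}\,\rmd t$, each resolvent $(V+t)^{-1}$ has exponentially decaying entries at a rate set by the distance $a+t\ge a$ from $-t$ to $\spec(V)$, and integrating over $t$---the integral being dominated by small $t$---preserves the decay; the identity $V^{1/2}=V\,V^{-1/2}$ then transfers it to $V^{1/2}$, since convolving the finite-ranged $V$ with an exponentially decaying matrix stays exponentially decaying. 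This is essentially the analysis of Ref.~\cite{NJPAreaLaw}.
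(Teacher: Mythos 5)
Your proposal is correct and follows essentially the same route as the paper, which proves this lemma only by deferring to Ref.~\cite{NJPAreaLaw}: there the argument is precisely yours, namely that the ground-state covariance matrix is block-diagonal with position and momentum blocks proportional to $V^{-1/2}$ and $V^{1/2}$, that the finite range makes $V$ banded in the lattice distance so degree-$n$ polynomials of $V$ vanish beyond distance $nR$, and that Chebyshev (Bernstein-ellipse) approximation of $x^{\pm 1/2}$ on $[\lambda_{\min}(V),\nnorm{V}]$, with $\lambda_{\min}(V)$ bounded below via the gap (for these models $\lambda_{\min}(V)=\Delta^2/2$), yields $|(V^{\pm 1/2})_{j,k}|\leq C\rho^{-n}$ and hence the claimed decay with $c_2=(\log\rho)/R$. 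Your accounting of the constants also matches the lemma's stated dependence on $\nnorm{V}$, the range, the gap and the lattice dimension, so nothing is missing.
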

We emphasize that the operator-norm $\norm{V}$ is independent of the system size due to the locality of the Hamiltonians that we consider. 
Let us now define the sets
\begin{align}
	A_{\leq l} \coloneqq \left\{ (j,k) \in (A\times A^c)\ |\ d(j,k)\leq l\right\}
\end{align}
and similarly the sets $A_{>l}$. 
Then we can write
\begin{align}
\Xi =	\Xi_{\leq l} + \Xi_{> l},
\end{align}
where $\Xi_{\leq l}$ is zero for all entries where $(j,k)\notin A_{\leq l}$ and $\Xi_{>l}$ containing the remaining entries. 
An important property of $\Xi_{\leq l}$ is that it is non-zero in at most $k_1 l |\partial A|$ rows and colums, where $k_1$ is some constant. 
Let us also define the truncated covariance matrices
\begin{align}
	\gamma_{\leq l} \coloneqq \begin{bmatrix} \gamma_A &\Xi_{\leq l} \\ \Xi^T_{\leq l} & \gamma_B\end{bmatrix}.
\end{align}
In a first step, we now relate the symplectic eigenvalues of $\gamma_A$ to those of $\gamma_{\leq l}$. 
To do that we use the \emph{interlacing theorem} for symplectic eigenvalues:
\begin{theorem}[Interlacing theorem \cite{Krbek2014,Bhatia2015,Marginal}] Let $A\in \RR^{2n\times 2n}$ be positive definite. 
	Partition $A$ as $A=[A_{i,j}]$ where each $A_{i,j}$, $i, j = 1, 2$, is an $n \times  n$ matrix. 
	A positive definite matrix $B \in \RR^{(2n-2)\times(2n-2)}$ is called an s-principal submatrix of $A$ if $B=[B_{i,j}]$, and each $B_{i,j}$ is an $(n-1) \times (n-1)$ principal submatrix of $A_{i,j}$ occupying the same position in $A_{i,j}$ for $i, j = 1, 2$. 
	In other words, $B$ is obtained from $A$ by deleting, for some $1 \leq i \leq n$, the $i$-th and $(n + i)$-th rows and columns of $A$. Then
	\begin{align}
		d^\uparrow_{j}(A) \leq d^\uparrow_{j}(B) \leq d^\uparrow_{j+2}(A),\quad 1\leq j\leq n-1,
	\end{align}
	where we adopt the convention that $d^\uparrow_{n+1}(A) = \infty$.
\end{theorem}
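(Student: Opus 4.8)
The plan is to reduce the interlacing of symplectic eigenvalues to the ordinary Cauchy-type interlacing for a Hermitian quadratic form by using a variational (min--max) characterization of the symplectic spectrum. The starting point is the observation that the symplectic eigenvalues of a positive definite $A$ are encoded in the Hermitian-definite matrix pencil $(i\sigma, A)$, where I take $\sigma$ in the block convention $\sigma=\begin{bmatrix}0 & I_{n}\\ -I_{n} & 0\end{bmatrix}$ compatible with the s-principal deletion. Since $A\succ 0$ and $i\sigma$ is Hermitian, every generalized eigenvalue $\lambda$ solving $i\sigma v=\lambda A v$ with $v\in\CC^{2n}$ is real. Multiplying by $\sigma$ and using $\sigma^2=-I$ gives $\sigma A v=-(i/\lambda)v$, so $\sigma A$ has eigenvalue $-i/\lambda$. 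Comparing with the defining property that $\spec(\sigma A)=\{\pm i\,d_j\}$, the positive generalized eigenvalues are exactly the reciprocals $1/d^\uparrow_j(A)$, taken in decreasing order for $j=1,\dots,n$, while the negative ones are their mirror images. Equivalently, after the substitution $w=A^{1/2}v$ these are the eigenvalues of the Hermitian matrix $A^{-1/2}(i\sigma)A^{-1/2}$, which by Sylvester's law of inertia is congruent to $i\sigma$ and hence has signature $(n,n)$.

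With this dictionary in place, the key step is to match the deletion defining $B$ with a restriction of the pencil. Deleting the $i$-th and $(n+i)$-th rows and columns of $A$ is precisely the restriction to the codimension-two coordinate subspace $S=\{v\in\CC^{2n}:v_i=v_{n+i}=0\}$. Crucially, this deletion removes one entire mode, so the symplectic form itself restricts correctly: the submatrix of $\sigma$ obtained by deleting the same rows and columns is exactly the symplectic form $\sigma'$ on $n-1$ modes, $\begin{bmatrix}0 & I_{n-1}\\ -I_{n-1} & 0\end{bmatrix}$. Consequently, for $v\in S$ one has simultaneously $v^\dagger A v=v^\dagger B v$ and $v^\dagger(i\sigma)v=v^\dagger(i\sigma')v$, so the pencil $(i\sigma,A)$ restricted to $S$ is literally the pencil $(i\sigma',B)$, whose positive generalized eigenvalues are $1/d^\uparrow_k(B)$.

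It then remains to invoke the Courant--Fischer min--max theorem for the generalized Rayleigh quotient $\mathcal{R}(v)=v^\dagger(i\sigma)v/v^\dagger A v$ together with the standard subspace-intersection argument. Writing $\lambda_1\geq\cdots\geq\lambda_{2n}$ for the ordered generalized eigenvalues of $(i\sigma,A)$ and $\beta_1\geq\cdots\geq\beta_{2n-2}$ for those of the restricted pencil, one has $\beta_k\leq\lambda_k$ because the maximization defining $\beta_k$ ranges only over $k$-dimensional subspaces of $S$, and $\beta_k\geq\lambda_{k+2}$ because every $(k+2)$-dimensional subspace meets $S$ in dimension at least $k$. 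Restricting to the positive part of the spectrum and inserting $\lambda_j=1/d^\uparrow_j(A)$ and $\beta_k=1/d^\uparrow_k(B)$, the chain $\lambda_{k+2}\leq\beta_k\leq\lambda_k$ becomes, upon taking reciprocals (which reverses all inequalities), $d^\uparrow_k(A)\leq d^\uparrow_k(B)\leq d^\uparrow_{k+2}(A)$, as claimed. The endpoint $k=n-1$ is covered by the convention $d^\uparrow_{n+1}(A)=\infty$, reflecting that the corresponding $\lambda_{n+1}$ has already crossed into the negative part of the spectrum, so the upper bound holds trivially.

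The main obstacle --- and the reason the indices shift by two rather than one --- is the compatibility required in the second step: an arbitrary principal submatrix of $A$ carries no well-defined symplectic spectrum, and only the paired (s-principal) deletion of the indices $i$ and $n+i$ guarantees that the restricted form $i\sigma|_S$ is again a genuine symplectic form of one fewer mode. Establishing this compatibility, together with the careful bookkeeping of the reciprocal correspondence $\lambda\leftrightarrow 1/d$ and of the signature $(n,n)$ splitting so that the positive pencil eigenvalues line up with the symplectic eigenvalues in the correct order, is where the real content lies; once it is in place, the interlacing is a routine consequence of Courant--Fischer.
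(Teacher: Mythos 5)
The paper itself offers no proof of this statement: it is quoted verbatim from the cited literature (Krbek; Bhatia--Jain), so there is no in-paper argument to compare against, and your proposal must stand on its own. It does: the proof is correct. Your dictionary between the symplectic spectrum and the Hermitian-definite pencil $(i\sigma,A)$ checks out --- from $i\sigma v=\lambda Av$ one indeed gets $\sigma A v=(-i/\lambda)v$, and since $\spec(\sigma A)=\{\pm i d_j\}$ with exactly $n$ eigenvalues of each sign (Sylvester gives $A^{-1/2}(i\sigma)A^{-1/2}$ signature $(n,n)$), the positive pencil eigenvalues in decreasing order are precisely $1/d^\uparrow_j(A)$, so the reciprocal reverses the ordering exactly as you use it. The point you correctly identify as the crux is also handled correctly: in the block convention $\sigma=\left[\begin{smallmatrix}0 & \one_n\\ -\one_n & 0\end{smallmatrix}\right]$ --- which is the convention the paper explicitly adopts when stating this theorem --- deleting rows and columns $i$ and $n+i$ restricts $i\sigma$ to the genuine symplectic form on $n-1$ modes, so the restricted pencil is literally $(i\sigma',B)$; an arbitrary (unpaired) principal submatrix would destroy this, which is why only s-principal deletions admit interlacing. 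The Courant--Fischer step is legitimate for the generalized Rayleigh quotient because the pencil is equivalent to the compression of the Hermitian matrix $A^{-1/2}(i\sigma)A^{-1/2}$ to the codimension-two subspace $A^{1/2}S$ (a sentence worth making explicit), and the standard subspace-intersection argument then gives $\lambda_{k+2}\leq\beta_k\leq\lambda_k$; restricting to the positive part and inverting yields $d^\uparrow_k(A)\leq d^\uparrow_k(B)\leq d^\uparrow_{k+2}(A)$, with the endpoint $k=n-1$ correctly reproducing the convention $d^\uparrow_{n+1}(A)=\infty$ since $\lambda_{n+1}<0$ makes the upper bound vacuous there. Note that this pencil/min-max route is essentially how the cited sources establish the result (Bhatia--Jain derive interlacing from variational principles for symplectic eigenvalues), so you have produced a faithful, self-contained reconstruction of the standard proof rather than a genuinely different one.
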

This theorem is adapted to the convention where covariance matrices are ordered such that the symplectic matrix takes the form 
\begin{align}
\sigma= \begin{bmatrix}0 &\one_n\\ -\one_n&0\end{bmatrix} 
\end{align}
instead of Eq.\ \eqref{sym}.
This corresponds to a basis change that exchanges 
\begin{align}
	{\mathbf{R}}=(X_1,P_1,\ldots,X_m,P_m)\mapsto (X_1,\ldots,X_m,P_1,\ldots,P_m).\nonumber
\end{align}
In our convention, the corresponding matrix $B$ is obtained by deleting two consecutive rows and the corresponding columns of $A$. 
In the interpretation of covariance matrices as quantum states this corresponds to tracing out the corresponding site from the lattice. 
We now use this theorem and erase all columns from $\gamma_{\leq l}$ on which $\Xi_{\leq l}$ is non-zero and all the rows in which the corresponding block of $\Xi^T_{\leq l}$ is non-zero. 
This amounts to tracing out $k_1 l |\partial A|$ many lattice sites from $A^c$ -- namely all those to which the sites in $A$ are correlated in the quantum state described by $\gamma_{\leq l}$. 
Let us denote the resulting covariance matrix by 
\begin{align}
	\tilde \gamma_{\leq l} = \begin{bmatrix} \gamma_A &0 \\0& \tilde\gamma_{A^c} \end{bmatrix},
\end{align}
which now describes a quantum state on a lattice with some sites removed, but the state on $A$ is unchanged. In particular, the set of symplectic eigenvalues of $\tilde \gamma_{\leq l}$ contains those of $\gamma_A$, due to the direct-sum structure.  
By interatively using the interlacing theorem, we now find that
\begin{align}
	d^\downarrow_{j}(\gamma_A)\leq d^\downarrow_{j}(\tilde \gamma_{\leq l}) \leq d^\downarrow_1(\gamma_{\leq l}),\quad j\geq 2 k_1 l |\partial A|.\label{eq:interlacing} 
\end{align}
In a second step, we now relate the symplectic eigenvalues of $\gamma_{\leq l}$ to those of $\gamma$ via the following perturbation bound.

\begin{theorem}[Perturbation bound \cite{Bhatia2015}] Let $\gamma,\gamma' \in \RR^{2n\times 2n}$ be two positive definite matrices and $\{d^\downarrow_j\}_{j=1}^{n},\{{d'}^\downarrow_j\}_{j=1}^{n}$ their ordered symplectic eigenvalues. Then
	\begin{align}
		\max_j |d^\downarrow_j - {d'}^\downarrow_j| \leq \left(\norm{\gamma}^{1/2} + \norm{\gamma'}^{1/2}\right)\norm{\gamma-\gamma'}^{1/2}. 
	\end{align}
\end{theorem}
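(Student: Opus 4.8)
The plan is to reduce the perturbation bound for symplectic eigenvalues to a perturbation bound for ordinary singular values, for which Weyl-type inequalities are readily available. The starting point is the standard characterization of symplectic eigenvalues through an antisymmetric matrix. Since $\gamma$ is positive definite, its square root $\gamma^{1/2}$ exists, and I would consider $M\coloneqq \gamma^{1/2}\sigma\gamma^{1/2}$. Because $\sigma^T=-\sigma$, the matrix $M$ is real and antisymmetric, so its eigenvalues are purely imaginary and come in conjugate pairs $\pm\ii d_j$ with $d_j\geq 0$. Moreover $M$ is similar to $\sigma\gamma$ (conjugate by $\gamma^{1/2}$), so the $d_j$ are precisely the symplectic eigenvalues of $\gamma$. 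The key consequence, which I would establish explicitly, is that the singular values of $M$ coincide with the symplectic eigenvalues of $\gamma$, each appearing with multiplicity two, since $M^TM=-M^2$ has eigenvalues $d_j^2$.

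First I would invoke the Weyl perturbation inequality for singular values, $\max_j|s_j(M)-s_j(M')|\leq \norm{M-M'}$, where $s_j$ denotes the ordered singular values, applied to $M=\gamma^{1/2}\sigma\gamma^{1/2}$ and $M'={\gamma'}^{1/2}\sigma{\gamma'}^{1/2}$. Because the ordered singular-value list of each matrix is exactly the ordered list of its symplectic eigenvalues with every entry doubled, the left-hand side equals $\max_j|d^\downarrow_j-{d'}^\downarrow_j|$, with the ordering automatically matched. This reduces the problem to estimating $\norm{M-M'}$.

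Next I would estimate $\norm{M-M'}$ by an add-and-subtract step,
\begin{align}
	M-M' = \gamma^{1/2}\sigma(\gamma^{1/2}-{\gamma'}^{1/2}) + (\gamma^{1/2}-{\gamma'}^{1/2})\sigma{\gamma'}^{1/2},\nonumber
\end{align}
so that submultiplicativity of the operator norm together with $\norm{\sigma}=1$ and $\norm{\gamma^{1/2}}=\norm{\gamma}^{1/2}$ yields
\begin{align}
	\norm{M-M'}\leq \left(\norm{\gamma}^{1/2}+\norm{\gamma'}^{1/2}\right)\norm{\gamma^{1/2}-{\gamma'}^{1/2}}.\nonumber
\end{align}
The final ingredient is the operator Hölder continuity of the square root, $\norm{\gamma^{1/2}-{\gamma'}^{1/2}}\leq\norm{\gamma-\gamma'}^{1/2}$, valid for positive semidefinite matrices. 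Chaining the three estimates produces exactly the claimed bound.

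I expect the main obstacle to be the first step, namely pinning down the correspondence between symplectic eigenvalues (defined through the Williamson normal form) and singular values in a way that respects the ordering, so that Weyl's inequality yields precisely the matched differences $|d^\downarrow_j-{d'}^\downarrow_j|$ rather than some mismatched comparison. The Hölder bound on the square root is a standard but genuinely non-elementary lemma (it is the $r=1/2$ case of the Hölder continuity of $A\mapsto A^r$ in operator norm), which I would cite rather than reprove; everything else is routine matrix-norm bookkeeping.
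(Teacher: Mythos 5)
Your proof is correct; note that the paper itself does not prove this theorem but imports it from Ref.~\cite{Bhatia2015}, and your argument is essentially the proof given in that reference: the symplectic eigenvalues of $\gamma$ coincide with the doubly degenerate singular values of the real antisymmetric (hence normal) matrix $\gamma^{1/2}\sigma\gamma^{1/2}$, so Weyl's singular-value perturbation inequality reduces the claim to bounding $\norm{\gamma^{1/2}\sigma\gamma^{1/2}-{\gamma'}^{1/2}\sigma{\gamma'}^{1/2}}$, which your add-and-subtract decomposition together with $\norm{\sigma}=1$, $\norm{\gamma^{1/2}}=\norm{\gamma}^{1/2}$ and the operator-norm H\"older bound $\norm{\gamma^{1/2}-{\gamma'}^{1/2}}\leq\norm{\gamma-\gamma'}^{1/2}$ handles exactly as claimed. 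All the auxiliary facts you invoke are correct as stated, including the point you flag as the main obstacle: since doubling each entry of a non-increasing list preserves the ordering, Weyl's inequality automatically compares $d^\downarrow_j$ with ${d'}^\downarrow_j$ in matched order.
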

We now use this theorem for the matrices $\gamma$ and $\gamma_{\leq l}$ using the following Lemma. 
\begin{lemma}[Off-diagonal perturbation] Let $\gamma$ be any covariance matrix fulfilling exponential decay of correlations as in Lemma~\ref{ed}. Then
\begin{align}
	\norm{\gamma - \gamma_{\leq l}} = \norm{\begin{bmatrix}0 & \Xi_{>l} \\ \Xi^T_{>l} &0 \end{bmatrix}} \leq k_2 \e^{- c_2 l}{|A|^{1/2}}, 
\end{align}
for some constant $k_2>0$.
	\begin{proof}
		The operator norm  is no larger than the Frobenius 2-norm,
		\begin{align}
			\norm{A}^2 \leq \norm{A}_F^2 =  {\tr(AA^T)} =   {\sum_{i,j} |A_{i,j}|^2}.
		\end{align}
		Therefore, we have
		\begin{align}
			\norm{\gamma- \gamma_{\leq l}}^2 \leq {2\sum_{j,k}(\Xi_{>l})_{j,k}^2}.
		\end{align}
		But then,
		\begin{align}
			\sum_{j,k} (\Xi_{>l})_{j,k}^2 &= \sum_{(j,k) \in A_{>l}} (\Xi_{>l})_{j,k}^2 \\
						&\leq \sum_{j\in A} \sum_{\substack{k\in A^c\\d(j,k)>l}} c_1^2 \e^{-2 c_2 d(j,k)}\\
						&\leq c_1^2 |A| \e^{-2c_2 l} \sum_{l'=0}^\infty\mathrm{poly}(l') \e^{-2 c_2 l'} \\&\leq  \frac{k_2^2}{2} \e^{-2c_2 l}|A|,\nonumber 
		\end{align}
		where we have decomposed the sum over $k$ into ``shells'' of constant width a distance $l'+l$ from $j$ away to obtain the second to last line.  
	\end{proof}
\end{lemma}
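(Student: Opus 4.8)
The plan is to control the operator norm of the off-diagonal perturbation by the (much easier to handle) Frobenius norm, and then to estimate the Frobenius norm directly from the entrywise exponential decay furnished by Lemma~\ref{ed}. First I would note that $\gamma - \gamma_{\leq l}$ has the claimed block form because $\gamma$ and $\gamma_{\leq l}$ share the diagonal blocks $\gamma_A$ and $\gamma_{A^c}$ and differ only in their off-diagonal blocks, whose difference is exactly $\Xi - \Xi_{\leq l} = \Xi_{>l}$. Then I would invoke the elementary bound $\norm{M}\leq \norm{M}_F = (\sum_{i,j}|M_{i,j}|^2)^{1/2}$, valid because the largest singular value never exceeds the root of the sum of all squared singular values. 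Applied to our matrix, and using that the two off-diagonal blocks are transposes of each other (so they carry equal Frobenius weight, while the diagonal blocks vanish), this reduces the claim to the entrywise estimate $\sum_{j,k}(\Xi_{>l})_{j,k}^2 \leq \tfrac{1}{2}k_2^2\,\e^{-2c_2 l}|A|$.

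Next I would substitute the decay bound $|(\Xi_{>l})_{j,k}|\leq c_1\e^{-c_2 d(j,k)}$ from Lemma~\ref{ed} and restrict the summation to the index set $A_{>l}$, i.e.\ to pairs with $j\in A$, $k\in A^c$ and $d(j,k) > l$. For each fixed $j\in A$ I would decompose the inner sum over $k$ into concentric shells of constant width at lattice distance $l' + l$ from $j$, with $l' = 0, 1, 2, \dots$. On a regular lattice of fixed spatial dimension the number of sites in each such shell grows only polynomially in the radius, so the inner sum is bounded by $c_1^2\,\e^{-2c_2 l}\sum_{l'\geq 0}\mathrm{poly}(l')\,\e^{-2c_2 l'}$, and the residual radial series converges to a finite constant since the exponential dominates any polynomial prefactor. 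Summing the resulting bound over the (at most $|A|$) sites $j\in A$ produces the overall factor of $|A|$; setting $k_2^2 = 2c_1^2\sum_{l'\geq 0}\mathrm{poly}(l')\,\e^{-2c_2 l'}$ and taking square roots then yields $\norm{\gamma-\gamma_{\leq l}}\leq k_2\,\e^{-c_2 l}|A|^{1/2}$.

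The only step that requires genuine care is the shell-counting, where the regularity of the lattice is used to ensure polynomial volume growth of balls and hence summability of the radial series; any residual subexponential prefactor in $l$ coming from the tail of that series can be absorbed into the exponential at the cost of an arbitrarily small reduction of the decay rate $c_2$. Beyond this I expect no real obstacle: the passage from operator norm to Frobenius norm is deliberately crude and throws away the sparse, banded structure of $\Xi_{>l}$, but it is entirely sufficient here, since in the assembly of Theorem~\ref{thm:bosonic} the factor $|A|^{1/2}$ is suppressed both by the square root in the perturbation bound and by the exponential $\e^{-c_2 l}$, so that a width $l$ growing only logarithmically in $|A|$ and in $1/\varepsilon$ will do.
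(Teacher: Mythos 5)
Your proposal is correct and follows essentially the same route as the paper: bound the operator norm of the off-diagonal block perturbation by its Frobenius norm, insert the entrywise exponential decay from Lemma~\ref{ed}, and perform the shell decomposition over $k\in A^c$ at distance $l'+l$ from each $j\in A$, yielding the factor $|A|$ and a convergent radial series absorbed into $k_2$. Your additional remark about absorbing the polynomial-in-$l$ shell-volume prefactor by slightly reducing $c_2$ is a careful touch the paper glosses over, but it does not change the argument.
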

The exponential decay of correlations furthermore implies that both $\norm{\gamma}$ and $\norm{\gamma_{\leq l}}$ are bounded by a constant. This can be seen using the fact that the operator norm of a Hermitian matrix $A$ is upper bounded as
\begin{align}
	\norm{A} \leq \max_i \sum_j |A_{i,j}|.
\end{align}
Consequently,
\begin{align}
	d^\downarrow_1(\gamma_{\leq l}) \leq 1 + k_3 l \e^{-c_2 l/2}|A|^{1/4}.
\end{align}
Combining this with \eqref{eq:interlacing} and with the fact that the symplectic eigenvalues of $\gamma_A$ are a subset of those of $\tilde \gamma_{\leq l}$, we find
\begin{align}
	d^\downarrow_j(\gamma_A) \leq 1 + k_3 l \e^{-c_2 l/2}|A|^{1/4} ,\quad j \geq 2k_1 l |\partial A|. 
\end{align}
We will now order the symplectic eigenvalues in the following way. We imagine associating each symplectic eigenvalue with one lattice site inside of $A$, the larger it is, the closer to the boundary.
Indeed, this can be done exactly using a symplectic transformation in $A$ acting on the Schmidt-normal form of pure Gaussian states (see below). 
We will now fix some distance $L$ and sum up all symplectic eigenvalues with a distance at least $\sim L$ from the boundary, namely all symplectic eigenvalues but the largest $2 k_1 L |\partial A|$ ones. 
By performing the sum in terms of "shells" of sites of a fixed width with distance $l\geq L$ to the boundary of $A$, we find
\begin{align}
	\sum_{j \geq 2k_1 L |\partial A|} d^\downarrow_j(A) &\leq k_4 \e^{- c_2 L/2} |A|^{5/4} \sum_{l=0}^\infty \mathrm{poly}(l)\e^{-c_2 l/2}\\
	&\leq k_5 \e^{-c_2 L/2} |A|^{5/4}.
\end{align}
If we now define
\begin{align}
	M(L) = 2 k_1 L |\partial A|
\end{align}
and choose
\begin{align}
	L_\varepsilon \coloneqq \frac{2}{c_2}\log\left(\frac{|A|^{5/4} k_5}{\varepsilon}\right),
\end{align}
we find
\begin{align}
	\sum_{j\geq M(L_\varepsilon)} d^\downarrow_{j}(\gamma_A) -1 \leq \varepsilon, 
\end{align}
with
\begin{align}
	M(L_\varepsilon) = c_3 \left[\log\left(\frac{c_4}{\varepsilon}\right) + \frac{5}{4}\log\left( |A|\right)\right] |\partial A|, \label{eq:Mepsilon}
\end{align}
for some constants $c_3,c_4$. 
Given these results on the symplectic spectrum, we can now proceed to prove the actual result.
For any Gaussian pure state, there exists a Gaussian product unitary $U_A \otimes U_{A^c}$, which is represented by a symplectic matrix of the form $S_A\oplus S_{A^c}$ and brings the ground state into a normal form of two-mode squeezed states. On the level of covariance matrices, this is reflected as 
\begin{align}
	(S_A\oplus S_{A^c}) \gamma (S_A^T \oplus S_{A^c}^T) = \begin{bmatrix} D_A& E\\E^T & D_{A^c}^T \end{bmatrix},
\end{align}
where $D_A$ and $D_{A^c}$ take the form 
\begin{align}
D_A = \mathrm{diag}\left(d_1(\gamma_A),d_1(\gamma_A),\ldots, d_{|A|}(\gamma_A),d_{|A|}(\gamma_A)\right)
\end{align}
and similarly for $A^c$. 
Importantly, 
\begin{align}
d^\downarrow_j(\gamma_{A^c})=d^\downarrow_j(\gamma_A) 
\end{align}
for all $j\leq |A|$ and $d^\downarrow_j(\gamma_{A^c})=1$ for $j>|A|$, corresponding normal-modes in $A^c$ that are in the vacuum and hence uncorrelated with $A$, since their state is pure.  
In the following we always mean $d_j(\gamma_A)$ when we write $d_j$ and introduce the notation $\mu_j \coloneqq (d_j-1)^{1/2}$. 
The off-diagonal matrix $E$ then takes the form $E = \begin{bmatrix} F& 0 \end{bmatrix}$ with 
\begin{align}
	F = \mathrm{diag}(\mu_1,-\mu_1,\ldots,\mu_{|A|},-\mu_{|A|}).\nonumber 
\end{align}
The corresponding state vector can be understood as a tensor product of two-mode squeezed state vectors
\begin{align}
	(U_A \otimes U_{A^c}) |\psi\rangle &= \otimes_{j=1}^{|A|} |\tilde\psi_{d_j}\rangle \otimes |\emptyset\rangle \\
	&=\otimes_{j=1}^{|A|} \frac{2}{d_j+1} \sum_k e_j^k |k,k\rangle_j \otimes |\emptyset\rangle, 
\end{align}
where $|k,k\rangle_j$ denotes the number-basis of the $j$-th pair of normal-modes  and $|\emptyset\rangle$ the vacuum state 
vector on the remaining oscillators. 
Importantly, exactly one of the two oscillators in each $|\tilde\psi_{d_j}\rangle$ is contained in $A$ and one of them in $A^c$.  
For $d_j>1$, the numbers $e_j$ are defined through
\begin{align}
	e_j^2 = 1 - \left(\frac{2}{d_j+1}\right)^2.  
\end{align}
In the following, let us further write $|\tilde \psi\rangle \coloneqq U_A \ket{\psi}$ and, for any $M\leq |A|$, let us write $| \tilde \psi_M \rangle$ and $|\psi_M\rangle$ for the corresponding states that result when we replace all but the $M$ largest symplectic eigenvalues $d_j^\downarrow$ 
by $1$. 
This corresponds to replacing the corresponding two-mode squeezed states by pairs of oscillators in the vacuum.  
Explicitly, we have
\begin{align}
	U_{A^c} |\tilde \psi_M\rangle = (U_A \otimes U_{A^c} ) |\psi_M\rangle=  \otimes_{j=1}^M |\tilde\psi_{d^{\downarrow}_j}\rangle \otimes |\emptyset\rangle,
\end{align}
where $|\emptyset\rangle$ now denotes the vacuum state vector on all but $M$ mode-pairs, $|A|-M$ oscillators of which are located in $A$. 
Using the above formulas, one easily find that
\begin{align}
	|\langle \tilde \psi|\tilde \psi_M\rangle|^2  = \prod_{j=M+1}^{|A|}
	\left(\frac{2}{d_j+1}\right)^2.
\end{align}
Using this result we now get the following truncation formula.
\begin{lemma}[Truncation formula]\label{tf}
 For $M=1,\dots, |A|$,
\begin{equation}
	\ |\langle \tilde \psi| \tilde \psi_M\rangle |^2 \geq \exp\biggl(
	-\sum_{j=M+1}^{|A|} (d_j-1)
	\biggr).
\end{equation}
\end{lemma}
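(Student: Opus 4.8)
The plan is to reduce the statement entirely to the exact overlap formula recorded just above the lemma and then to an elementary scalar inequality; there is no genuine geometry or analysis left once that formula is in hand. Starting from
\[
|\langle \tilde\psi|\tilde\psi_M\rangle|^2 = \prod_{j=M+1}^{|A|}\left(\frac{2}{d_j+1}\right)^2,
\]
I would observe that both this product and the target quantity $\exp\!\left(-\sum_{j=M+1}^{|A|}(d_j-1)\right)$ factorize over the index $j$. Since every factor on the left is positive and the exponential is monotone, it suffices to prove the single-variable bound $\left(\frac{2}{d+1}\right)^2 \geq \e^{-(d-1)}$ for each symplectic eigenvalue $d\geq 1$ and then multiply these inequalities over $j=M+1,\dots,|A|$; the product inequality is preserved factor by factor.

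To establish the scalar inequality I would take the natural logarithm (the inverse of $\e$) and show that $g(d)\coloneqq (d-1) - 2\log\!\left(\frac{d+1}{2}\right)\geq 0$ on $[1,\infty)$. One checks $g(1)=0$ and $g'(d) = 1 - \frac{2}{d+1} = \frac{d-1}{d+1}\geq 0$ for $d\geq 1$, so $g$ is nondecreasing and hence nonnegative, which is exactly what is needed. An equivalent and perhaps cleaner route substitutes $x = \frac{2}{d+1}\in(0,1]$, turning the claim into $\log x \geq 1 - 1/x$, i.e.\ $\log y \leq y-1$ with $y=1/x\geq 1$; this is the standard logarithm bound, valid for all $y>0$ with equality at $y=1$. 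Either way the work is a one-line convexity estimate.

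The one point to be careful about, rather than a real obstacle, is the hypothesis $d_j\geq 1$: this is precisely the statement that the symplectic eigenvalues of a physical covariance matrix are bounded below by one, which is already used implicitly when writing the two-mode-squeezed normal form and defining $\mu_j=(d_j-1)^{1/2}$. One also needs $M\leq |A|$ so that the product ranges over an honest subset of mode pairs (for $M=|A|$ the product is empty and the bound reads $1\geq 1$), and no convergence issue arises since the product is finite. The estimate is tight exactly at $d_j=1$, corresponding to truncated modes already in the vacuum, consistent with unit overlap in that case. With the lemma in hand, the replacement of the product overlap by the sum $\sum_{j>M}(d_j-1)$ is what allows the earlier symplectic-spectrum bound $\sum_{j\geq M(L_\varepsilon)} d^\downarrow_j(\gamma_A)-1\leq\varepsilon$ to be fed directly into a fidelity estimate.
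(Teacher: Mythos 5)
Your proof is correct and is essentially the paper's own argument: the paper likewise rewrites the product $\prod_{j>M}\left(\frac{2}{d_j+1}\right)^2$ as $\exp\left(-2\sum_{j>M}\log\frac{d_j+1}{2}\right)$ and applies the standard bound $\log y \leq y-1$ (your $g'(d)\geq 0$ check is just a rederivation of that same inequality). Your added remarks on $d_j\geq 1$, the empty-product case $M=|A|$, and tightness at $d_j=1$ are sound but not needed beyond what the paper states.
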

\begin{proof} We have
\begin{align}
	|\langle \tilde \psi| \tilde \psi_M\rangle |^2 &= \exp\left(- 2 \sum_{j=M+1}^{|A|} \log\left(\frac{d_j+1}{2}\right)\right)\\
	&\geq \exp\left(-2\sum_{j=M+1}^{|A|} \left(\frac{d_j-1}{2}\right)\right) \\&= \exp\left(-\sum_{j=M+1}^{|A|} \left(d_j -1\right)\right),
\end{align}
	where we have used the inequality $\log(x) \leq x-1$ for $x>0$. 
\end{proof}
We will now use the truncation formula together with $M(L_\varepsilon)$ from \eqref{eq:Mepsilon} to find
\begin{align}
	|\langle\tilde \psi| \tilde \psi_{M(L_\varepsilon)}\rangle|^2 \geq \exp(-\varepsilon) \geq 1-\varepsilon.
\end{align}
However, $|\tilde\psi_{M(L_\varepsilon)}\rangle $ is of the required product form, since all but $M(L_\varepsilon)$ of the sites in $A$ are in the pure vacuum state. This completes the proof.

\clearpage
\end{document}